\pgfplotsset{compat=1.3}
\newtheorem{thm}{Theorem}[section]
\newtheorem{lemma}[thm]{Lemma}
\newtheorem{corollary}[thm]{Corollary}
\newtheorem{remark}[thm]{Remark}
\numberwithin{equation}{section}
\numberwithin{figure}{section}
       {\begin{Sbox}\begin{minipage}}%
       {\end{minipage}\end{Sbox}\fbox{\TheSbox}}
\newcommand\Pon{\hbox{P}_{\scriptsize
       \hbox{I}}}
\newcommand\Pth{\hbox{P}_{\scriptsize
       \hbox{III}}}
\newcommand\Psix{{\rm P}_{\rm{\scriptstyle VI}}}
\newcommand\complex{\mathbb C}
\newcommand\real{\mathbb R}
\begin{document}

\title[Quicksilver solutions of $q\Pon$]{Quicksilver Solutions\\ of a {q}-difference first Painlev\'e equation }
\author{Nalini Joshi}\thanks{This research was supported by an  Australian Laureate Fellowship \#FL120100094 and grant \# DP130100967 from the Australian Research Council. NJ would also like to thank Professors Lucia di Vizio, Kenji Kajiwara and Claude M. Viallet for their helpful comments on a preliminary version of this paper.}
\address{School of Mathematics and Statistics F07, The University of Sydney, NSW 2006, Australia}
\email{nalini.joshi@sydney.edu.au}
\begin{abstract}
In this paper, we present new, unstable solutions, which we call quicksilver solutions,  of a $q$-difference Painlev\'e equation in the limit as the independent variable approaches infinity. The specific equation we consider in this paper is a discrete version of the first Painlev\'e equation ($q\Pon$), whose phase space (space of initial values) is a rational surface of type $A_7^{(1)}$. We describe four families of almost stationary behaviours, but focus on the most complicated case, which is the vanishing solution. We derive this solution's formal power series expansion, describe the growth of its coefficients and show that, while the series is divergent, there exist true analytic solutions asymptotic to such a series in a certain $q$-domain. The method, while demonstrated for $q\Pon$, is also applicable to other $q$-difference Painlev\'e equations. 
\end{abstract}
\keywords{$q$-discrete first Painlev\'e equation, discrete Painlev\'e equations, asymptotic behaviour, quicksilver solutions, non-linear $q$-difference equations, the Painlev\'e equations, the Painlev\'e transcendents}
\subjclass[2000]{34M30,39A13,34M55}

\maketitle

\section{Introduction}
We consider asymptotic behaviours of almost stationary solutions of the $q$-difference first Painlev\'e equation (derived by Ramani {\it et al.} \cite{rg:96})
\begin{equation}\label{eq: qp1 scaled}
{q}\Pon :\quad  w_{n+1}\,w_{n-1} =\frac{1}{w_n}-\,\frac{1}{\xi\, w_n^2}
\end{equation}
as $\xi\to\infty$, where $\xi=\xi_0\,q^n$, $w_n=w(\xi)$, $w_{n+1}=w(q\,\xi)$, $w_{n-1}=w(\xi/q)$ (with $q\not=0, 1$). (See \S \ref{sec: notation} for further details on notation.)
In the continuum limit, $q\Pon$ becomes the first Painlev\'e equation $\Pon$: $y_{tt}=6\,y^2-t$ (see \S \ref{sec: versions}). In this paper, we identify new unstable solutions that vanish in the limit $|\xi|\to\infty$, which we call \lq\lq quicksilver solutions\rq\rq . 

The six classical Painlev\'e equations (denoted $\Pon$--$\Psix$) were discovered by Painlev\'e and co-workers in their search for second-order ordinary differential equations with solutions that can be globally continued in the complex plane \cite{p:02,f:05,g:10}. There has been significant interest in these equations due to their role as physical models in a wide range of applications. Their solutions are regarded as modern non-linear special functions \cite{c:10} and certain critical solutions have attracted a great deal of attention as  universal models in quantum gravity \cite{gm:90} and random matrix theory \cite{d:07}. 

More recently, considerable attention has been directed at discrete versions of the Painlev\'e equations. These discrete versions arise in orthogonal polynomial theory \cite{s:39} and in studies of quantum gravity \cite{fik:91}. Many discrete Painlev\'e equations were discovered by using the singularity confinement property \cite{rgh:91}, which was regarded as a discrete manifestation of the property used by Painlev\'e and his colleagues in the context of ordinary differential equations. For each continuous Painlev\'e equation, there are now known to be many integrable second-order discrete versions.

By showing how to obtain discrete Painlev\'e equations as mappings on rational surfaces obtained from a nine-point blow-up of the complex projective plane, Sakai \cite{s:01} resolved and classified these equations. $q\Pon$ is identified in Sakai's classification as an equation associated with a rational surface of type $A_7^{(1)}$. The iteration of the equation occurs as a mapping of points on the rational surface and, therefore, the latter is referred to as the initial value space of the equation. In \S \ref{sec: versions}, we show how the other $q$-discrete versions of $\Pon$ given in the literature are related to Equation \eqref{eq: qp1 scaled}. 

Although the equations are now classified and relations between them understood, no analytic information about any generic solution of a $q$-difference version of any Painlev\'e equation has yet been found. Nishioka \cite{n:10} has proved that $q\Pon$ cannot have any solutions that are expressible in terms of solutions of first-order difference equations and therefore that its solutions cannot be expressed in terms of well known basic or $q$-special functions that satisfy linear difference equations. The current paper investigates the analytic properties of the higher transcendental solutions of $q\Pon$ for the first time.

Our investigation is analogous to the asymptotic study of the solutions of the first Painlev\'e equation in the limit as $|t|\to\infty$, which began a century ago \cite{b:13}. Whilst the general solution of $\Pon$ is meromorphic, with poles distributed across the complex plane, it is well known that there exist solutions that appear to have no poles near infinity in a sector of angular width $4\pi/5$. These are one-free-parameter families of solutions asymptotic to $(t/6)^{1/2}\bigl(1+{\mathcal O}(t^{-3/4(1+\epsilon)})\bigr)$ in such sectors, called \textit{tronqu\'ee} solutions\footnote{The name arises because the lines of poles of such solutions are truncated in these sectors.}.  Within these families, there exist five unique solutions called tri-tronqu\'ee solutions, which are asymptotically free of poles in a sector of width $8\pi/5$. Only one of these is real on the real line and this solution was shown to have no poles whatsoever on the real positive semi-axis in \cite{jk:01}. Another distinguishing feature of such solutions is that they arise in neighbourhoods of singularities, i.e., points where the gradient vanishes, of the leading-order conserved quantity \cite{dj:11}.

The solutions we describe in this paper share comparable properties, but differ in one significant respect. For $q\in{\complex}$, solutions are iterated on $q$-spirals in the $\xi$-plane. (We assume that $q$ is complex, but note that when $q$ is real, these spirals are replaced by semi-infinite real intervals.) Whilst its region of asymptotic validity is not sectorial in the traditional sense, it is a region with boundaries given by arcs of spirals, such as shown in Figure \ref{spiral sectors}. Because the solution's asymptotic behaviour is given by a power series expansion in such a region, it is  bounded and free of poles there. So far, these properties are analogous to those of the truncated solutions in the continuous case. 
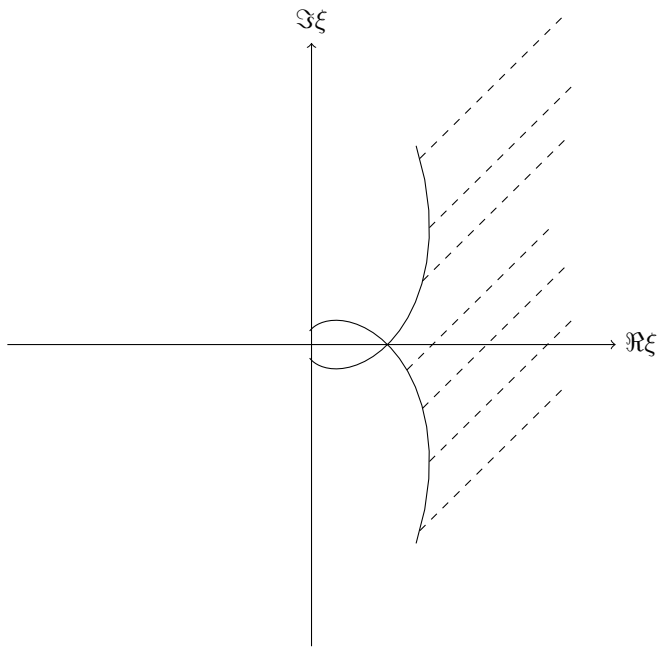
\begin{figure}[t]
\begin{tikzpicture}[auto]
  \draw[->] (-4,0) -- (4,0) node[right] {$\Re{\xi}$};
  \draw[->] (0,-4) -- (0,4) node[above] {$\Im{\xi}$};
  \draw   plot [domain=-1.707:1.09] ({ exp(\x)*cos(\x r) }, { exp(\x)*sin(\x r) })  ;
  \draw   plot [domain=-1.09:1.707] ({ exp(-\x)*cos(\x r) }, { exp(-\x)*sin(\x r) })  ;
  \draw[dashed]   plot [domain=0:1.9] ({ exp(0.785398)*cos(0.785398 r) +\x }, { exp(0.785398)*sin(0.785398 r) + \x })  ;
  \draw[dashed]   plot [domain=0:1.9] ({ exp(0.523599)*cos(0.523599 r)+ \x }, { exp(0.523599)*sin(0.523599 r)+ \x })  ;
  \draw[dashed]   plot [domain=0:1.9] ({ exp(1.047198)*cos(1.047198 r)+ \x }, { exp(1.047198)*sin(1.047198 r)+ \x })  ;
  \draw[dashed]   plot [domain=0:1.9] ({ exp(0.785398)*cos(0.785398 r) +\x }, { - exp(0.785398)*sin(0.785398 r) + \x })  ;
  \draw[dashed]   plot [domain=0:1.9] ({ exp(0.523599)*cos(0.523599 r)+ \x }, { -exp(0.523599)*sin(0.523599 r)+ \x })  ;
  \draw[dashed]   plot [domain=0:1.9] ({ exp(1.047198)*cos(1.047198 r)+ \x }, { -exp(1.047198)*sin(1.047198 r)+ \x })  ;
  \draw[dashed]   plot [domain=0:1.9] ({ exp(0.2617994)*cos(0.2617994 r)+ \x }, { -exp(0.2617994)*sin(0.2617994 r)+ \x })  ;
  \end{tikzpicture}\caption{An example of a sectorial region bounded by arcs of spirals} \label{spiral sectors}
\end{figure}

However, a major difference with the continuous case arises when we consider how these asymptotic behaviours are related to the singularities of the invariant of the leading order autonomous equation. 
Consider $q\Pon$ to leading-order:
\begin{equation}\label{eq: qp1 auto}
 \mathtt{w}_{n+1}\,\mathtt{w}_n\,\mathtt{w}_{n-1} =1
\end{equation}
This equation has an invariant given by
\begin{equation}\label{eq: ham auto}
K(x, y) =\frac{x^2\,y^2+x\,y+x+y}{x\,y}
\end{equation}
i.e., $K(\mathtt{w}_{n+1}, \mathtt{w}_{n})-K(\mathtt{w}_{n}, \mathtt{w}_{n-1})=0$ when $\mathtt{w}_n$ satisfies Equation \eqref{eq: qp1 auto}. The only finite points where the gradient of $K(x, y)$ vanishes is given by $x=y=\omega$ where $\omega^3=1$. 
Three of the solutions we find lie in the neighbourhood of these singularities of $K(x, y)$, but the fourth one is very different. 

That remaining, vanishing, solution is asymptotically close to a base point $(w_n, w_{n-1})=(1/\xi, 0)$, where the flow of the $q\Pon$ vector field becomes undefined (i.e., the definition of $w_{n+1}$ becomes $0/0$). It, therefore, has the most complicated behaviour and we focus on the properties of this solution in this paper. This solution is unstable under perturbations in the initial-value space.  Due to its mercurial, unstable nature and distinguished role in the space of initial values, we named it a \textit{quicksilver} solution. We deduce an asymptotic series expansion for this behaviour and show that the series is divergent. Nevertheless, we are able to prove that a continuous solution exists, which is asymptotic to the series. We call such a solution a \lq\lq true solution\rq\rq\ in this paper.

Such proofs are well developed for ordinary differential equations \cite{hs:99} and for classes of difference equations  \cite{hs:64} for which the equation can be expanded in a power series in the independent variable $n$ in the limit of interest. This is not the case for $q\Pon$, even after transforming to a new variable $n=\log\xi/\log q$ in which the iterates appear in standard form.  This is due to the fact that the term $\xi=\xi_0\,q^n$, which appears explicitly in $q\Pon$, cannot be expanded in a power series in $n$ as $n\to\infty$ with $|q|>1$. 

Other common techniques in the literature rely on Borel transforms to sum asymptotic series. However, to our knowledge, the application of this technique to non-autonomous $q$-difference equations, and therefore corresponding results for existence of true solutions that are asymptotic to a divergent series, remains restricted to {\em linear} $q$-difference equations (see references in Ramis {\em et. al}\/\cite{rsz:12}). 

We overcome these restrictions by using an idea of Cotton \cite{c:11}, adapted to the setting of $q$-difference equations. In the following, we use the phrase \lq\lq almost invariant solution\rq\rq\ to mean solutions that approach a steady-state as $|\xi|\to\infty$. The plan of the paper is as follows. In \S \ref{expansion}, we derive formal series expansions of almost invariant solutions, which satisfy $q\Pon$ in the asymptotic limit $|\xi|\to\infty$. We focus on the vanishing solution and describe the growth of coefficients of its formal series expansion as the summation index approaches infinity. In \S \ref{true}, we prove the existence of true solutions asymptotic to such series. The paper ends with a conclusion in \S \ref{conclusion}. Details of the proofs in \S\S \ref{expansion} and \ref{true} are provided in the appendices. 

\subsection{Notation}\label{sec: notation} Several notations are used throughout the paper to denote the same function: $w=w(\xi)=w(\xi_0\,q^n)=w_{n}$, $\overline w=w(q\,\xi)=w(\xi_0\,q^{n+1})=w_{n+1}$, $\underline w=w(\xi/q)=w(\xi_0\,q^{n-1})=w_{n-1}$. Upper and lower bars may also be used to denote iteration of functions of other variables, where there is no ambiguity.

\subsection{Different Versions of $q\Pon$}\label{sec: versions}
There are different versions of the $q$-discrete first Painlev\'e equation with rational surface of type $A_7^{(1)}$ in the literature. Historically, the first one was derived as a limiting case of the $q$-discrete third Painlev\'e equation or q$\Pth$ (associated with rational surface $A_5^{(1)}$) by Ramani {\it et al}\/\cite{rg:96}. In this derivation, the equation is given as
\begin{equation}\label{eq: qp1 z}
y_{n+1} \, y_{n-1} = \frac{z}{y_n}+\frac{\mu}{y_n^2}
\end{equation}
where $z=\zeta_0\lambda^{n/2}$ and $\mu$ is constant. Under the identification $\nu=\lambda^{1/2}$, $x=\nu^n$, $g(x)=y_n$, and $\zeta_0=\alpha$, $\mu=\beta$, this equation becomes  
\begin{equation}\label{eq: qp1 x}
g_{n+1} \, g_{n-1} = \frac{\alpha\,x}{g_n}+\frac{\beta}{g_n^2}
\end{equation}
where $g_n=g(x)$, $g_{n+1}=g(\nu\,x)$, $g_{n-1} = g(x/\nu)$. The continuum limit of Equation \eqref{eq: qp1 z} is given by $y_n=1+\epsilon^2\,y(t)$, $z=4\,\lambda^{n/2}$, $\mu=-3$, $\lambda=1-\epsilon^5/8$, $t=-\epsilon\,n$, which leads to $y_{tt}=6\,y^2-t$. 

Another version is
\begin{equation}\label{eq: qp1 a7}
 f_{n+1} \, f_{n-1}= -\,\frac{s}{f_n}+\,\frac{s}{f_n^2}
\end{equation}
where $s$ is an exponential function of $n$ and $f_n=f(s)$. This equation is due to Sakai \cite{s:01} and studied in  \cite{n:10,o:10}. It can be transformed to Equation \eqref{eq: qp1 x}, by taking a new dependent variable: $f_n=v_n/\rho_n$, which leads to 
\begin{equation}\label{eq: qp1 w}
v_{n+1}\,v_{n-1} = \rho_{n+1}\,\rho_n\,\rho_{n-1}\,\left(-\,\frac{s}{v_n}+\frac{s\rho_n}{v_n^2}\right)
\end{equation}
This becomes Equation \eqref{eq: qp1 x}, when we take $\rho_n=-\,\beta/(\alpha\,x)$ and $s=\alpha^4\,x^4/\beta^3$, with $v_n=v(s)=g(x)=g_n$.

We take a further transformation, useful for studying the asymptotic limit $x\to\infty$, by using 
\begin{align}
\label{eq: x transf} g_n&=x^{1/3}\,u_n\ 
\Rightarrow\ u_{n+1}\,u_{n-1} =\frac{\alpha}{u_n}+\frac{\beta}{x^{4/3}\,u_n^2}
\end{align}
Defining $w_n=w(\xi)=u_n/\alpha^{1/3}$, $\xi=-\,\gamma\,x^{4/3}$, where $\gamma=\alpha^{4/3}/\beta$, and renaming $\nu^{4/3}=q$, we arrive at Equation \eqref{eq: qp1 scaled}, which is the subject of this paper.

\section{Asymptotic Expansions of Almost Invariant Solutions}\label{expansion}
In this section, we consider solutions of $q\Pon$ whose leading order behaviour as $|\xi|\to\infty$ is independent of $n$. Detailed proofs of major results in this section can be found in Appendix \ref{app expansion}. 
The assumption of stationarity (to leading order) is equivalent to $\overline w\sim w$ and $\underline w\sim w$ as $|\xi|\to\infty$. To leading order, stationarity in Equation \eqref{eq: qp1 scaled} implies that $w(w^3-1)={\mathcal O}(1/\xi)$. Therefore, we obtain either $w^3=1+{\mathcal O}(1/\xi)$ or $w={\mathcal O}(1/\xi)$. These two cases are considered in separate subsections below. As remarked in the Introduction, we focus more on the details of the second case in this paper.

\subsection{Non-zero Asymptotic Behaviour}
For the case $w^3\sim 1$, we obtain the following result.
\begin{lemma}
Equation \eqref{eq: qp1 scaled} has a formal power series solution
\begin{equation}\label{eq: formal}
w(\xi)=\sum_{n=0}^\infty\frac{a_n}{\xi^n},
\end{equation}
where 
\begin{subequations}\label{formal coefficients}
\begin{align}
&a_0^3=1, \label{formal rec 0}\\
&a_1\,( q+1+ q^{-1})=-\,1,\label{formal rec 1}\\
&\sum_{m=0}^{n}\sum_{j=0}^{n-m}\sum_{l=0}^m\,a_j\,a_{n-m-j}\,a_l\,a_{m-l} q^{(n-m-2\,j)}=a_n,\ n\ge 2. \label{formal rec n}
\end{align}
\end{subequations}
\end{lemma}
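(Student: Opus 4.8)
The plan is to substitute the formal ansatz \eqref{eq: formal} into \eqref{eq: qp1 scaled} and match powers of $\xi^{-1}$. First I would multiply the equation through by $w_n^2$ to clear denominators, obtaining $w_{n+1}\,w_{n-1}\,w_n^2 = w_n - 1/\xi$. Using the notation of \S\ref{sec: notation}, note that replacing $\xi$ by $q\,\xi$ sends $\xi^{-m}$ to $q^{-m}\xi^{-m}$, so $\overline w = \sum_m a_m q^{-m}\xi^{-m}$ and $\underline w = \sum_m a_m q^{m}\xi^{-m}$. Substituting these and collecting the coefficient of $\xi^{-n}$ on the left-hand side gives a quadruple convolution: writing the product $\overline w\,\underline w\,w\,w$ as a Cauchy product of four series, the coefficient of $\xi^{-n}$ is exactly $\sum_{m=0}^n \sum_{j=0}^{n-m}\sum_{l=0}^m a_j\,a_{n-m-j}\,a_l\,a_{m-l}\,q^{(n-m-2j)}$, where the $q$-power tracks the contribution $q^{-j}$ from $\overline w$ and $q^{+(n-m-j)}$ from $\underline w$ (the two unbarred factors contributing no $q$-power); one checks $-j + (n-m-j) = n-m-2j$. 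The right-hand side $w_n - 1/\xi$ contributes $a_n$ for $n\neq 1$ and $a_1 - 1$ for $n=1$.

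The second step is to read off the three cases. At order $\xi^0$ only $m=j=l=0$ survives on the left, giving $a_0^4 = a_0$, hence $a_0^3 = 1$ (note $a_0\neq 0$, since $a_0=0$ would force the whole series to vanish identically, contradicting $w^3\sim 1$), which is \eqref{formal rec 0}. At order $\xi^{-1}$ the left-hand side picks up the four terms in which exactly one index equals $1$ and the rest are $0$: these contribute $a_0^3 a_1(q^{-1} + q^{+1} + 1 + 1)$ — wait, more carefully, with $m\in\{0,1\}$ one finds the $q$-weights $q^{-1}$ (from $j=1$), $q^{+1}$ (from the $\underline w$ slot), and $q^0$ twice (from the two unbarred slots) — so the coefficient is $a_0^3\,a_1\,(q + 1 + q^{-1})$. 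Setting this equal to $a_1 - 1$ and dividing by $a_0^3 = 1$ yields $a_1(q+1+q^{-1}) = -1$, which is \eqref{formal rec 1}. For $n\geq 2$ the right-hand side contributes only $a_n$, giving \eqref{formal rec n} directly.

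The last step is to confirm that these relations genuinely determine a formal series, i.e. that the recursion is solvable for $a_n$ in terms of $a_0,\dots,a_{n-1}$. In \eqref{formal rec n} the terms on the left involving $a_n$ are precisely those where one of the four indices equals $n$ and the other three vanish; these contribute $a_0^3\,a_n\,(q^{-n} + q^{n} + 2)$, i.e. $a_0^3\,a_n\,(q^{n/2}+q^{-n/2})^2 = a_n\,(q^{n/2}+q^{-n/2})^2$. Moving this to the right gives $a_n\bigl((q^{n/2}+q^{-n/2})^2 - 1\bigr) = -(\text{terms in } a_0,\dots,a_{n-1})$, and since $(q^{n/2}+q^{-n/2})^2 = 1$ would require $q^{n/2}+q^{-n/2} = \pm 1$, a non-generic algebraic condition on $q$ (excluded, e.g., whenever $|q|>1$), the coefficient $a_n$ is uniquely determined. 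The main obstacle — and the only genuinely non-bookkeeping point — is organizing the quadruple Cauchy product and the attendant $q$-powers cleanly so that the triple sum in \eqref{formal rec n} emerges in exactly the stated form; I would handle this by introducing an index $k = n-m-j$ for the $\overline w$–$w$ split and $l, m-l$ for the $\underline w$–$w$ split, tracking the $q$-exponent as a linear function of the split indices, and then relabelling. The remainder is routine verification that the low-order cases \eqref{formal rec 0}–\eqref{formal rec 1} are the specializations $n=0,1$ of the general pattern with the inhomogeneous term $-1/\xi$ correctly placed.
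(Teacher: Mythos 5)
Your proposal is correct and follows essentially the same route as the paper's (one-line) proof: multiply through by $w_n^2$ and apply the convolution lemma (Lemma \ref{convolution lemma}) to the quadruple product, with the $q$-weights carried only by the barred factors. One small bookkeeping slip: at order $\xi^{-1}$ the four weights you correctly list ($q$, $q^{-1}$, $1$, $1$) sum to $q+2+q^{-1}$, not $q+1+q^{-1}$; the relation \eqref{formal rec 1} then follows as stated only after subtracting the $a_1$ coming from the right-hand side, and the same $+2$ versus $+1$ distinction reappears (correctly handled) in your solvability argument for $n\ge 2$.
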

\begin{proof}
Multiply Equation \eqref{eq: qp1 scaled} by $w^2$ and use Lemma \ref{convolution lemma}.
\end{proof}
\begin{remark}
Let $\omega$ be a cubic root of unity, $\omega^3=1$. Equation \eqref{eq: qp1 scaled} is invariant under the discrete rotation $w\mapsto \omega w$, with $\xi\mapsto \xi/\omega$. Hence, we restrict our attention to the case $a_0=1$ without loss of generality.
\end{remark}
\begin{remark}
The recursion relation \eqref{formal rec n} can be rearranged to yield a formula for the $n$-th coefficient $a_n$. Doing so yields
\begin{align}
\nonumber a_n&\bigl( q^n+1+ q^{-n}\bigr)\\
\nonumber &=-\,\sum_{l=1}^{n-1}a_l\,a_{n-l}\,( q^{(2l-n)}+1)\\
\label{a_n}&\quad -\,\sum_{m=1}^{n-1}\sum_{j=0}^{n-m}\,\sum_{l=0}^m\,a_j\,a_{n-m-j}\,a_l\,a_{m-l}\, q^{(n-m-2\,j)}
\end{align}
\end{remark}
Note that the recursion relation \eqref{a_n} is symmetric under $q\mapsto 1/q$. 
\subsection{Vanishing Asymptotic Behaviour}
For the case $w\ll 1$ as $\xi\to\infty$, we obtain the following result.
\begin{lemma}
Equation \eqref{eq: qp1 scaled} has a formal power series solution
\begin{equation}\label{eq: formal z}
w(\xi)=\sum_{n=1}^\infty\frac{b_n}{\xi^n},
\end{equation}
where 
\begin{subequations}\label{z formal coefficients}
\begin{align}
&b_1=1 \label{z rec 1}\\
&b_2=0 \label{z rec 2}\\
&b_3=0 \label{z rec 3}\\
&b_n=\sum_{r=2}^{n-2}\sum_{k=1}^{r-1}\sum_{m=1}^{n-r-1}\,b_k\,b_{r-k}\,b_m\,b_{n-r-m} q^{(r-2\,k)},\ n\ge 4 \label{z formal rec n }
\end{align}
\end{subequations}
\end{lemma}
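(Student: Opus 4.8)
Because the claim concerns a \emph{formal} power series solution, the whole argument is a comparison of coefficients of powers of $\xi$. The plan is to substitute the vanishing ansatz \eqref{eq: formal z} into \eqref{eq: qp1 scaled} after clearing denominators. Multiplying \eqref{eq: qp1 scaled} by $w_n^2$ brings it to the polynomial form
\begin{equation*}
w^2\,\overline{w}\,\underline{w} = w - \frac{1}{\xi},
\end{equation*}
to which Lemma \ref{convolution lemma} applies. First I would record the effect of the two $q$-shifts on the ansatz: if $w(\xi)=\sum_{n\ge 1}b_n\,\xi^{-n}$ then $\overline{w}=w(q\,\xi)=\sum_{n\ge 1}b_n\,q^{-n}\,\xi^{-n}$ and $\underline{w}=w(\xi/q)=\sum_{n\ge 1}b_n\,q^{n}\,\xi^{-n}$. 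The left-hand side is therefore the fourfold Cauchy product of these four series, and Lemma \ref{convolution lemma} gives its $\xi^{-N}$ coefficient as $\sum b_a\,b_b\,b_c\,b_d\,q^{d-c}$, the sum running over ordered quadruples with $a+b+c+d=N$ and all indices $\ge 1$, the two unshifted factors contributing $a,b$, the factor $\overline{w}$ contributing $c$, and the factor $\underline{w}$ contributing $d$.

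Next I would match coefficients order by order. Since every index is at least $1$, the left-hand side has no term of order $\xi^{-1}$, $\xi^{-2}$ or $\xi^{-3}$; comparing with the right-hand side, which equals $(b_1-1)\,\xi^{-1}+\sum_{n\ge 2}b_n\,\xi^{-n}$, forces $b_1=1$, $b_2=0$ and $b_3=0$, i.e.\ \eqref{z rec 1}--\eqref{z rec 3}. (In particular the series is genuinely $\mathcal{O}(1/\xi)$, so it falls into the second of the two cases distinguished at the start of the section, and a solution of the form $\sum_{n\ge 2}b_n\xi^{-n}$ is excluded.) For $N\ge 4$ the right-hand side contributes only $b_N$, so $b_N=\sum_{a+b+c+d=N}b_a\,b_b\,b_c\,b_d\,q^{d-c}$; as a sanity check this gives $b_4=b_1^4=1$. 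It then remains to rewrite this fourfold sum in the nested form stated in the lemma: put $r=c+d$ (hence $2\le r\le N-2$ since $a,b\ge 1$), set $k=c$ and $r-k=d$ (hence $1\le k\le r-1$ and $q^{d-c}=q^{r-2k}$), and set $m=a$, $N-r-m=b$ (hence $1\le m\le N-r-1$). This is exactly the displayed recursion for $b_n$, $n\ge4$, in \eqref{z formal coefficients}.

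Finally, I would note that the recursion in \eqref{z formal coefficients} is triangular: for each $N$, $b_N$ is a Laurent polynomial in $q$ with integer coefficients depending only on $b_1,\dots,b_{N-1}$, so the $b_n$ are determined uniquely and without ambiguity, and the series \eqref{eq: formal z} built from them is, by construction, a formal solution of \eqref{eq: qp1 scaled}. I do not anticipate any genuine obstacle beyond careful bookkeeping; the two points that need attention are (i) keeping the $q$-dilation factors $q^{-n}$ and $q^{n}$ attached to the correct factor ($\overline{w}$ and $\underline{w}$ respectively) when invoking the convolution lemma, and (ii) checking that the change of summation variables $(a,b,c,d)\mapsto(r,k,m)$ reproduces exactly the index ranges $2\le r\le n-2$, $1\le k\le r-1$, $1\le m\le n-r-1$ appearing in the statement.
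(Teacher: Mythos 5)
Your proposal is correct and follows essentially the same route as the paper: multiply by $w_n^2$, expand $w^2\,\overline{w}\,\underline{w}$ via the convolution lemma (noting the leading term $b_1^4/\xi^4$ forces $b_1=1$, $b_2=b_3=0$), and match coefficients for $n\ge 4$. Your index bookkeeping $(a,b,c,d)\mapsto(r,k,m)$ with $q^{d-c}=q^{r-2k}$ reproduces the stated recursion exactly, matching the paper's decomposition into the sums $A_r$ and $B_{n-r}$ used later in Appendix A.
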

\begin{proof}
Multiply Equation \eqref{eq: qp1 scaled} by $w^2$ and use Lemma \ref{convolution lemma}. Note that the first term in the product $\overline w\,w^2\,\underline w$ is given by $b_1^4/\xi^4$. Hence Equations (\ref{z rec 1}-\ref{z rec 3}) follow. Equation \eqref{z formal rec n } follows from the remaining terms in the convolution.
\end{proof}
The following two results about the coefficients $b_n$ are proved in Appendix \ref{app expansion}.

\begin{corollary}\label{modulo 3}
$b_{3\,p+2}=0$ and $b_{3\,p+3}=0$ for all integers $p\ge 0$.
\end{corollary}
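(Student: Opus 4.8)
The plan is to prove the claim by strong induction on $p$, using the recursion \eqref{z formal rec n } to show that whenever the index is $\equiv 2$ or $\equiv 0 \pmod 3$ the corresponding coefficient $b_n$ vanishes. The base case is furnished directly by \eqref{z rec 2} and \eqref{z rec 3}, which give $b_2=b_3=0$ (i.e. the statement for $p=0$). For the inductive step, I would assume that $b_j=0$ for every $j<n$ with $j\equiv 0$ or $2\pmod 3$, and then examine the four-fold sum defining $b_n$ for $n\equiv 0$ or $2\pmod 3$.

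The key observation is a parity-type bookkeeping argument on the indices appearing in the convolution. In the term $b_k\,b_{r-k}\,b_m\,b_{n-r-m}$ the four subscripts $k$, $r-k$, $m$, $n-r-m$ sum to $n$. Working modulo $3$, if $n\not\equiv 1\pmod 3$ then it is impossible for all four of these subscripts to be $\equiv 1\pmod 3$ simultaneously (since four copies of $1$ sum to $4\equiv 1\pmod 3$, forcing $n\equiv 1$). Hence for $n\equiv 0$ or $2\pmod 3$, every term in the sum has at least one factor whose index is $\equiv 0$ or $2\pmod 3$ and is strictly less than $n$; by the induction hypothesis that factor is zero, so the whole term vanishes, and therefore $b_n=0$. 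This closes the induction and yields $b_{3p+2}=b_{3p+3}=0$ for all $p\ge 0$.

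The step requiring the most care is the modular arithmetic on the indices together with the range restrictions in the summation: one must check that the ranges $2\le r\le n-2$, $1\le k\le r-1$, $1\le m\le n-r-1$ indeed force each of $k$, $r-k$, $m$, $n-r-m$ to lie strictly between $0$ and $n$, so that the induction hypothesis genuinely applies to whichever factor has the "bad" residue. This is routine but must be stated precisely, since the argument breaks down if a subscript could equal $0$ or $n$. A clean way to phrase the core combinatorial fact is: for nonnegative integers $i_1,i_2,i_3,i_4$ with $i_1+i_2+i_3+i_4=n$, if $n\equiv 0$ or $2\pmod 3$ then at least one $i_t\not\equiv 1\pmod 3$; this is immediate because $i_1+i_2+i_3+i_4\equiv 4\equiv 1\pmod 3$ whenever all $i_t\equiv 1\pmod 3$.

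I expect no serious obstacle here — the only subtlety is organizing the induction so that the hypothesis covers \emph{all} smaller indices of the two relevant residue classes at once (rather than handling $3p+2$ and $3p+3$ in separate inductions), since a term in the recursion for $b_{3p+3}$ may well contain a factor with index $\equiv 2\pmod 3$, and vice versa. Running a single strong induction over the combined set $\{n : n\equiv 0 \text{ or } 2 \pmod 3\}$ sidesteps this entirely.
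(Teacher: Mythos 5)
Your proposal is correct and follows essentially the same route as the paper: a joint induction over both residue classes combined with the observation that a non-zero term forces all four convolution indices to be $\equiv 1\pmod 3$, whence $n\equiv 4\equiv 1\pmod 3$, contradicting $n\equiv 0$ or $2\pmod 3$. The only cosmetic difference is that the paper factors the sum as $b_n=\sum_r A_r\,B_{n-r}$ and runs the modular argument on each factor separately, whereas you sum all four subscripts at once.
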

The recursion relation for the non-zero coefficients $d_p=b_{3\,p+1}$ is given by Equation \eqref{staggered rec} in Appendix \ref{app expansion}. The first few values of $d_p$ are
\begin{subequations}
\begin{align}
d_1&=1\\
d_2&=q^3+2+\frac{1}{q^3}\\
d_3&=q^9+2\,q^6+5\,q^3+6+\frac{5}{q^3}+\frac{2}{q^6}+\frac{1}{q^9}\\
d_4&=q^{18}+2\,q^{15}+5\,q^{12}+10\,q^9+16\,q^6+23\,q^3+26+\frac{23}{q^3}+\frac{16}{q^6}+\frac{10}{q^9}+\frac{5}{q^{12}}+\frac{2}{q^{15}}+\frac{1}{q^{18}}
\end{align}
\end{subequations}

\begin{lemma}\label{z large n}
For $|q|>e^2$, the solution of the recurrence relation \eqref{z formal rec n } has the asymptotic behaviour \begin{equation}
b_{3\,p+1}\underset{p\to\infty}{=}{\mathcal O}\left(|q|^{3\,p\,(p-1)/2}\,\prod_{k=0}^{p-1}\bigl(1+q^{-3k}\bigr)^2\right).
\end{equation}
\end{lemma}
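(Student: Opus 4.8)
The plan is to analyze the staggered recursion for $d_p = b_{3p+1}$ directly, extracting the growth rate from the dominant term in the quadruple convolution. First I would rewrite the recursion \eqref{z formal rec n } restricted to indices $\equiv 1 \pmod 3$ (using Corollary \ref{modulo 3} so that only such terms survive) in the form $d_p = \sum d_i\,d_j\,d_k\,d_\ell\, q^{(\cdot)}$, where the sum is over $i+j+k+\ell = p-1$ with all indices $\ge 1$, and the power of $q$ comes from the $q^{(r-2k)}$ weight re-expressed in the staggered variables. The key observation is that the $q$-power is maximized at an extreme corner of the summation simplex — roughly where three of the four indices are as small as possible and one carries almost all the weight — so that a single term (or a bounded number of terms) dominates, and the growth is governed by the recursion $D_p \sim q^{(\text{max exponent})}\, D_{p-1}$ times a correction accounting for the near-boundary terms.

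The central step is therefore to identify the maximal exponent of $q$ appearing in the $p$-th stage. I expect this to be a quadratic function of $p$: writing the maximal exponent as $E_p$, one finds $E_p - E_{p-1}$ is linear in $p$, which upon summation yields $E_p = 3p(p-1)/2 + O(p)$, matching the claimed $|q|^{3p(p-1)/2}$ factor. To get the additional product $\prod_{k=0}^{p-1}(1+q^{-3k})^2$ rather than a crude $C^p$ bound, I would track not just the single dominant term but the full "boundary layer" of the simplex: terms where three indices are small contribute, at each fixed small value, a factor $(1 + q^{-3} + q^{-6} + \cdots)$-type geometric series, and the exponent $2$ reflects that two of the three small indices range independently (the fourth being determined). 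An induction hypothesis of the shape $d_p \le C\, |q|^{E_p} \prod_{k=0}^{p-1}(1+|q|^{-3k})^2$ (and a matching lower bound from retaining the single corner term) should close, with $|q| > e^2$ entering to guarantee that the neglected interior terms — of which there are at most $O(p^3)$ — are dominated by the gap $|q|^{E_p - E_{p-1}}$ between consecutive stages, since $|q|^{3p} \gg p^3$ once $|q|$ is bounded away from $1$ by enough.

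The main obstacle I anticipate is the bookkeeping of the quadruple convolution under the mod-$3$ staggering: one must verify carefully that the re-indexed exponent $r - 2k$ (with $r, k$ themselves constrained and then pushed into staggered variables $i, j, k, \ell$) really is maximized on the boundary and compute that maximum exactly, including its $O(p)$ correction, since an error there would spoil the precise product in the statement. A secondary subtlety is that the recursion is only claimed symmetric under $q \mapsto 1/q$, so the "dominant corner" and its mirror both contribute at comparable orders when $|q|$ is large — this is precisely why the product appears squared and why the estimate is stated with $\mathcal{O}(\cdot)$ rather than an asymptotic equivalence; I would handle this by symmetrizing the bound from the outset. Finally, I would relegate the explicit verification that the small-$p$ values $d_1, \dots, d_4$ listed above are consistent with the asymptotic formula to a remark, using them mainly as a sanity check on the leading exponents $0, 3, 9, 18 = 3p(p-1)/2$ and on the palindromic coefficient patterns that the $(1+q^{-3k})^2$ product predicts.
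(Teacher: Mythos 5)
Your proposal follows essentially the same route as the paper's proof: pass to the staggered recursion \eqref{staggered rec} for $d_p=b_{3p+1}$, observe that the total exponent (the $q$-weight combined with the inductively assumed quadratic growth of the coefficients) is a convex quadratic in the summation indices maximised at the corners of the simplex, normalise by $q^{3p(p-1)/2}$, and close an induction in which the polynomially many interior terms are absorbed using $|q|>e^2$. This is precisely the paper's two-step substitution $d_p=c_p\,q^{3p(p-1)/2}$ followed by $c_p=f_p\prod_{k=0}^{p-1}(1+q^{-3k})^2$, with the convexity estimates \eqref{D estimates} and an explicit constant $\phi$ bounding $|f_p|$.

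One detail of your heuristic should be corrected before it becomes a proof. The factor $\prod_{k=0}^{p-1}(1+q^{-3k})^2$ does not arise from a geometric series $(1+q^{-3}+q^{-6}+\cdots)$ over a boundary layer of small indices: summing such a layer would give a $p$-independent factor of order $(1-|q|^{-3})^{-2}$ per stage, hence a $C^p$ bound, which is not the stated product (the stated product converges to a constant as $p\to\infty$ when $|q|>1$). Rather, at stage $p$ exactly the four corner terms $(i,j,l)\in\{(p-1,0,0),\,(p-1,p-1,0),\,(0,0,0),\,(0,0,p-1)\}$ contribute $d_{p-1}\,q^{3(p-1)}\bigl(1+2q^{-3(p-1)}+q^{-6(p-1)}\bigr)=d_{p-1}\,q^{3(p-1)}\bigl(1+q^{-3(p-1)}\bigr)^2$, as in \eqref{scaled rec again}; telescoping this exact per-stage factor is what produces the squared $q$-shifted factorial, the squaring reflecting the $q\mapsto 1/q$ symmetry pairing the corners, as you anticipated. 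All sub-corner terms belong to the remainder controlled by \eqref{D estimates} and the hypothesis $|q|>e^2$. With that correction your induction hypothesis coincides with the paper's and the argument closes as you describe.
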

\noindent This lemma shows that the coefficients of the asymptotic expansion \eqref{eq: formal z} grow as a quadratic function of its index. The series is, therefore, divergent. We provide a direct proof of this lemma in Appendix \ref{app expansion}. It should be noted that the bound on $|q|$ can be improved to $|q|>1$. The assumption  $|q|>e^2$ is made here to obtain simpler explicit estimates in the proof. 
\begin{remark}
Note that an estimate of the growth of the coefficients could also have been obtained from an application of the main theorem in Zhang \cite{z:1998}. The latter requires properties of the linearisation of $q\Pon$ (considered in the next section) and shows that the rescaled series $\sum b_n |q|^{-\,3\,n\,(n+1)/2}/\xi^n$ converges. In the terminology of that paper, the formal series \eqref{eq: formal z} lies in the space of q-Gevrey series of order $3$. However, the result of Lemma \ref{z large n} provides a sharper estimate of growth of the coefficients $b_n$ because when $n=3p+1$, the q-Gevrey estimate becomes $|q|^{27\,(p+1/3)\,(p+2/3)/2}$, which is a weaker upper bound.
\end{remark}
\begin{remark}
In the case $|q|<1$, our arguments would have led to $b_{3\,p+1}={\mathcal O}\bigl(|q|^{-3\,p\,(p-1)/2}\prod_{k=0}^{p-1}\bigl(1+q^{3k}\bigr)^2\bigr)$.
\end{remark}
\section{True Solutions}\label{true}
In this section, we prove that there exist continuous solutions of Equation \eqref{eq: qp1 scaled} in a non-empty region of the $\xi$-plane, which are asymptotic to the formal series expansion \eqref{eq: formal z} deduced in the previous section. Recall that we refer to such solutions as \lq\lq true solutions\rq\rq . Also, note that we  assume $|q|>1$.

As noted in \S 1, for $q\in\complex$, the typical region of asymptotic validity for $q$-difference equations is no longer a sector, but regions bounded by $q$-spirals in the complex plane. Let $D_0$ be a disk in $\complex$ whose points are at a distance at least $r_0>0$ away from the origin. Define $D$ to be a union of open sets $D_n=q^n\,D_0$. There is a finite collection of sectors ${\mathcal S}_k=\bigl\{\xi\in\complex \ \bigm|\ r_0<1/|\xi|\le r_k, \alpha_k\le \arg{\xi}\le \beta_k\bigr\}$ in $\complex^*$ such that each $D_n$ is contained in one of the ${\mathcal S}_k$. Let ${\mathcal S}=\cup {\mathcal S}_k$.

There are two main steps in the proof. First, given $\mathcal S$, a standard use of the Borel-Ritt theorem \cite{f:53,hs:99} provides a function $W(\xi)$ such that
\begin{enumerate}
\item $W(\xi)$ is analytic in $\mathcal S$;
\item $W(\xi)\sim \sum_{n=1}^\infty\,{b_n}/{\xi^n}$, as $\xi\to\infty$ in $\mathcal S$.
\end{enumerate} 
Second, we show that there exists a true solution of $q\Pon$ asymptotic to \eqref{eq: formal z} by using a contraction mapping argument. Proofs of the following results are provided in Appendix \ref{app true}.

\begin{lemma}\label{perturbation} Under the change of variables to $w_n= W_n+\,v_n$, Equation \eqref{eq: qp1 scaled} becomes
\begin{equation}\label{eq: qp2 pert}
v_{n+1}+\left(2\frac{W_{n+1}}{W_n}-\,\frac{1}{W_n^2\,W_{n-1}}\right)\,v_n+\frac{W_{n+1}}{W_{n-1}}\,v_{n-1}={\mathcal R}_2(v_n, v_{n-1}, t)
\end{equation}
where $t=1/\xi$ and ${\mathcal R}_2$ is holomorphic in the domain 
\[{\mathcal D}=\bigl\{(x, y, t)\in \complex^3\Bigm| |t|<\epsilon_0, \|(x,y)\|\le \delta_0\bigr\}.\]
for some positive real numbers $\epsilon_0$ and $\delta_0$. Moreover, 
\begin{align*}
{\mathcal R}_2(0, 0, t) ={\mathcal O}\bigl(|t|^N\bigr),\ \forall N\in{\mathbb N},\quad
\frac{\partial {\mathcal R}_2}{\partial x} (0, 0, t)=0,\ 
\frac{\partial {\mathcal R}_2}{\partial y} (0, 0, t)=0 .
\end{align*}
Letting ${\mathbf v}=(x, y)$, define the norm $\|{\mathbf v}\|=\max\{|x|, |y|\}$ and let $\nabla {\mathcal R}_2=(\partial_{x}{\mathcal R}_2, \partial_{y}{\mathcal R}_2)$. Then there exist strictly positive real numbers $\delta_0$, $\epsilon_0$, $C_1$, $C_2$, $C_3$, and $C_4$ such that 
\begin{align*}
&\|{\mathcal R}_2(x, y, t)\|\le C_1\,\|{\mathbf v}\|^2+C_2|t|\\
&\|\nabla {\mathcal R}_2\| \le C_3\,\|{\mathbf v}\|+C_4\,|t|
\end{align*}
if $\|{\mathbf v}\|<\delta_0$ and $|t|<\epsilon_0$. 
\end{lemma}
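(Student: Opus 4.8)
The plan is to substitute $w_n = W_n + v_n$ directly into \eqref{eq: qp1 scaled} and expand. First I would multiply through by $w_n^2 = (W_n+v_n)^2$ to clear denominators, obtaining the polynomial relation $(W_{n+1}+v_{n+1})(W_{n-1}+v_{n-1})(W_n+v_n)^2 = (W_n+v_n) - t$, with $t = 1/\xi$. Isolating the terms linear in $v_{n+1}, v_n, v_{n-1}$ and dividing by the coefficient $W_n^2 W_{n-1}$ of $v_{n+1}$ gives the stated linear part: the coefficient of $v_{n-1}$ is $W_n^2 W_{n+1}/(W_n^2 W_{n-1}) = W_{n+1}/W_{n-1}$, and the coefficient of $v_n$ is $2 W_{n+1} W_{n-1} W_n / (W_n^2 W_{n-1}) - 1/(W_n^2 W_{n-1}) = 2W_{n+1}/W_n - 1/(W_n^2 W_{n-1})$. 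Everything else — the quadratic and higher terms in $v$, together with the inhomogeneous term coming from the fact that $W_n$ only satisfies $q\Pon$ up to an error — gets collected into ${\mathcal R}_2(v_n, v_{n-1}, t)$ after dividing by $W_n^2 W_{n-1}$.

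Next I would verify the regularity and vanishing claims. Holomorphy of ${\mathcal R}_2$ on ${\mathcal D}$ follows once we check that $W_n^2 W_{n-1}$ is bounded away from zero on ${\mathcal S}$: since $W(\xi) \sim b_1/\xi + \cdots$ with $b_1 = 1$, we have $W_n \sim 1/\xi_n$ where $|\xi_n| \to \infty$, so in fact $W_n \to 0$; the correct statement is that $W_n^2 W_{n-1} \sim q/\xi_n^3$ is nonzero, and dividing by it is fine on ${\mathcal S}$ (this forces $|t|$ small, hence $|\xi|$ large, which is exactly the domain ${\mathcal D}$ after rescaling). The identity ${\mathcal R}_2(0,0,t) = {\mathcal O}(|t|^N)$ for all $N$ is the key structural point: setting $v \equiv 0$, ${\mathcal R}_2(0,0,t)$ measures the failure of $W_n$ to solve $q\Pon$, i.e. $W_{n+1}W_{n-1} - 1/W_n + 1/(\xi W_n^2)$ divided by the linear coefficient; because $W$ is asymptotic to the formal series \eqref{eq: formal z} which solves the equation order by order (Lemma on the $b_n$), this residual is beyond-all-orders small, $o(|t|^N)$ for every $N$. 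The derivative conditions $\partial_x {\mathcal R}_2(0,0,t) = \partial_y {\mathcal R}_2(0,0,t) = 0$ hold by construction, since all the $v$-linear terms were extracted into the explicit linear part.

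Finally, the quantitative bounds. Writing ${\mathcal R}_2 = [\,\text{(residual of }W\text{)} + \text{(quadratic form in }v\text{ with analytic coefficients)}\,]/(W_n^2 W_{n-1})$, Taylor's theorem with remainder on the compact polydisk $\|{\mathbf v}\| \le \delta_0$, $|t|\le \epsilon_0$ gives $\|{\mathcal R}_2\| \le C_1 \|{\mathbf v}\|^2 + C_2|t|$ (the $C_2|t|$ absorbing the beyond-all-orders residual, which is certainly ${\mathcal O}(|t|)$) and similarly $\|\nabla{\mathcal R}_2\| \le C_3\|{\mathbf v}\| + C_4|t|$, with the constants obtained from Cauchy estimates on the coefficients of the power series of ${\mathcal R}_2$ in $(x,y)$ over a slightly larger polydisk contained in ${\mathcal D}$. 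I expect the main obstacle to be bookkeeping rather than conceptual: one must be careful that the division by $W_n^2 W_{n-1}$, whose modulus behaves like $|t|^3$, does not spoil the estimates — but the numerator's $W$-residual is $o(|t|^N)$ for every $N$, so even after dividing by $|t|^3$ it remains ${\mathcal O}(|t|)$, and the quadratic-in-$v$ part of the numerator carries enough powers of $W$ in its coefficients (every monomial in the expansion of $(W_n+v_n)^2(W_{n\pm1}+v_{n\pm1})$ that is at least quadratic in $v$ has an explicit $W$-factor or not, but the offending lowest terms like $v_{n+1}v_{n-1}W_n^2$ divided by $W_n^2 W_{n-1}$ give $v_{n+1}v_{n-1}/W_{n-1} \sim \xi_{n-1} v_{n+1}v_{n-1}$) — so one actually has to rescale $v$ by a power of $t$, or equivalently argue on ${\mathcal S}$ where the growth of $1/W_{n-1}$ is controlled against the quadratic smallness of $\|{\mathbf v}\|$; making this precise is where the real care lies, and it is deferred to Appendix \ref{app true}.
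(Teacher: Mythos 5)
Your overall strategy (perturb around the Borel--Ritt function $W$, extract the linear part, bound the remainder by Taylor/Cauchy estimates, and get the beyond-all-orders residual from the fact that $W$ is asymptotic to a formal solution) is the right one, and your computation of the linear coefficients is correct. But there is a structural gap in how you form ${\mathcal R}_2$. After clearing denominators and dividing by $W_n^2W_{n-1}$, the ``everything else'' you collect on the right-hand side still contains $v_{n+1}$: the quartic product $(W_{n+1}+v_{n+1})(W_{n-1}+v_{n-1})(W_n+v_n)^2$ has quadratic and higher terms such as $W_n^2\,v_{n+1}v_{n-1}$ and $2W_{n+1}W_n\,v_{n+1}v_n$ (you even write one of them down). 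So your remainder is a function of $(v_{n+1},v_n,v_{n-1},t)$, not of $(v_n,v_{n-1},t)$ as the lemma asserts and as the later summation/contraction argument (Lemmas \ref{summation eq} and \ref{contraction}) requires -- the fixed-point operator is only well defined if the right-hand side does not involve the forward iterate. The paper avoids this by exploiting that $w_{n+1}$ enters \eqref{eq: qp1 scaled} linearly: it solves for the forward iterate outright, setting
\begin{equation*}
v_{n+1}={\mathcal R}(v_n,v_{n-1},t):=\frac{W_n+v_n-t}{(W_{n-1}+v_{n-1})(W_n+v_n)^2}-W_{n+1},
\end{equation*}
which is a rational function of $(v_n,v_{n-1})$ alone, and then defines ${\mathcal R}_2$ by subtracting the linear-in-$(v_n,v_{n-1})$ part of ${\mathcal R}$ and moving it to the left. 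With that definition the vanishing of $\partial_x{\mathcal R}_2(0,0,t)$ and $\partial_y{\mathcal R}_2(0,0,t)$ holds by construction, exactly as you say, and ${\mathcal R}_2(0,0,t)={\mathcal O}(|t|^N)$ for all $N$ because the constant term $\bigl(W_n-t\bigr)/\bigl(W_{n-1}W_n^2\bigr)-W_{n+1}$ is precisely the residual of $W$.

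Two further remarks. First, holomorphy: with the paper's ${\mathcal R}$, the singularities are at $v_n=-W_n$ and $v_{n-1}=-W_{n-1}$ rather than at zeros of the fixed quantity $W_n^2W_{n-1}$, so the domain ${\mathcal D}$ must be understood as avoiding these; your remark that one only needs $W_n^2W_{n-1}\ne 0$ does not apply to the correct decomposition. Second, your closing worry about coefficients of the quadratic terms growing like powers of $1/W\sim\xi$ is a genuine delicacy, and it persists in the paper's formulation as well (e.g.\ $\partial_x^2{\mathcal R}(0,0,t)$ grows like $|t|^{-4}$); the paper's own proof simply asserts that ``the estimates for ${\mathcal R}_2$ follow'' without addressing it, so you are not wrong to flag it -- but flagging it and deferring it does not close your argument, whereas the $v_{n+1}$-dependence is a definite error in your decomposition that must be fixed before anything downstream can be used.
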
 
In the next lemma, we find solutions of the linear homogeneous part of Equation \eqref{eq: qp2 pert}.  
\begin{lemma}\label{linear eq}
Let $P_n$ be a solution of the linear homogeneous difference equation
\begin{equation}\label{eq: lin hom P}
P_{n+1} + \,\left(2\,\frac{W_{n+1}}{W_n}-\frac{1}{W_n^2\,W_{n-1}}\right)\,P_n+\frac{W_{n+1}}{W_{n-1}}\,P_{n-1}=0
\end{equation}
Then, there exist two solutions $P^\pm(\xi)$ of this equation with the following asymptotic behaviours valid as $n\to\infty$
\begin{subequations}\label{pasym}
\begin{align}
\label{p plus}P^+(\xi)&\sim \prod_{j=n_0}^n\frac{1}{W_j^2\,W_{j-1}}\sim c_+\,q^{3\,n(n-5/3)/2} \\
\label{p minus}P^-(\xi)&\sim \prod_{j=n_0}^n\,W_{j+1}\,W_j^2\sim c_-\,q^{-\,3\,n(n+5/3)/2} 
\end{align}
\end{subequations}
for some constant initial point $n_0$ and constants $c_\pm$. 
\end{lemma}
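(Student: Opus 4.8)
The plan is to treat the linear three-term recurrence \eqref{eq: lin hom P} as a perturbation of its "dominant balance" structure, exactly as one does for linear $q$-difference equations with irregular singularities. First I would record the asymptotics of the coefficients. Since $W_n\sim b_1/\xi=1/(\xi_0 q^n)$ to leading order (the higher $b_n$ are subdominant because the series is $q$-Gevrey), one has $W_{n+1}/W_n\sim q^{-1}$, $W_{n+1}/W_{n-1}\sim q^{-2}$, and $1/(W_n^2W_{n-1})\sim \xi_0^3 q^{3n}$. Hence the middle coefficient blows up like $q^{3n}$ while the outer coefficients stay bounded, so the recurrence is, to leading order, $P_{n+1}-\xi_0^3 q^{3n}P_n+q^{-2}P_{n-1}\approx 0$. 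This is the source of the two very different growth rates: one solution is governed by the balance $P_{n+1}\approx \xi_0^3 q^{3n}P_n$, giving $P_n^+\sim\prod_{j=n_0}^n (1/(W_j^2W_{j-1}))\sim \mathrm{const}\cdot \xi_0^{3n} q^{3\sum j}$, and the subdominant one by $q^{-2}P_{n-1}\approx -\xi_0^3 q^{3n}P_n$, i.e. $P_n^-\sim\prod_{j=n_0}^n (W_{j+1}W_j^2)$. The stated closed forms $c_\pm q^{\mp 3n(n\pm 5/3)/2}$ then follow by evaluating $\prod_{j=n_0}^n q^{\pm 3j}$ up to a multiplicative constant absorbing $\xi_0$ and the lower limit; I would just verify that $\sum_{j=n_0}^n 3j=\tfrac{3}{2}(n^2+n)+\mathrm{const}$ combines with the $q^{-2}$-type corrections to produce the exponent $3n(n-5/3)/2$, and symmetrically for the minus sign.

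To make this rigorous rather than heuristic I would substitute the ansatz $P_n^{+}=Q_n\prod_{j=n_0}^n(1/(W_j^2W_{j-1}))$ into \eqref{eq: lin hom P} and derive the resulting recurrence for $Q_n$. Dividing through by the product, the $P_{n+1}$ term contributes $Q_{n+1}/(W_{n+1}^2W_n)$, the middle term contributes the coefficient times $Q_n$, and the $P_{n-1}$ term contributes $(W_{n+1}/W_{n-1})Q_{n-1}W_{n-1}^2 W_{n-2}=W_{n+1}W_{n-1}W_{n-2}Q_{n-1}$, which is $\mathcal O(q^{-3n})$ relative to the others. So $Q_n$ satisfies $Q_n = Q_{n+1}\cdot\varepsilon_n^{(1)} + Q_{n-1}\cdot\varepsilon_n^{(2)}$ with $\varepsilon_n^{(i)}\to 0$ geometrically (indeed like $|q|^{-3n}$). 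A standard fixed-point argument in the Banach space of bounded sequences on $\{n\ge n_1\}$, with $n_1$ large, then shows this has a solution $Q_n\to$ a nonzero constant, which gives $P^+$. For $P^-$ I would use the analogous substitution $P_n^-=\widetilde Q_n\prod_{j=n_0}^n(W_{j+1}W_j^2)$ and the same contraction scheme, now with the $P_{n-1}$ term balancing the middle term and the $P_{n+1}$ term being the small perturbation.

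The main obstacle is controlling the products $\prod W_j$ precisely enough, because $W$ is only known asymptotically (via the Borel–Ritt function of \S\ref{true}), not exactly: one must show that $W_n/(1/\xi)\to 1$ with error $\mathcal O(\xi^{-3})$ uniformly on the relevant $q$-spiral domain $\mathcal S$, so that $\prod_{j=n_0}^n W_j^2W_{j-1}$ really equals $\xi_0^{-3(n-n_0+1)}q^{-3\sum j}$ times a convergent (hence bounded, nonzero) infinite product, and similarly that the perturbation coefficients $\varepsilon_n^{(i)}$ are summable. Once that uniform estimate is in hand — it follows from item (2) of the Borel–Ritt construction together with $b_2=b_3=0$ — the convergence of $\sum_n |q|^{-3n}$ makes the contraction estimate immediate, and identifying the leading constant exponent $3n(n\mp 5/3)/2$ is then just bookkeeping with $\sum_{j=n_0}^n j$. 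A secondary point to check is that $P^+$ and $P^-$ are genuinely linearly independent: this is clear since their Casoratian, which satisfies a first-order recurrence with ratio $W_{n+1}/W_{n-1}\sim q^{-2}\ne 0$, is nonvanishing, and in any case $P^+/P^-\sim q^{3n(n-5/3)}\to\infty$.
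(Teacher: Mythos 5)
Your strategy coincides with the paper's: identify the dominant coefficient $1/(W_n^2W_{n-1})$ in \eqref{eq: lin hom P}, read off the two formal product solutions from the two possible balances, and then upgrade these to true solutions by factoring out the product and treating what remains as a summable perturbation of the linear recurrence. The paper outsources that last step to the irregular-singular asymptotics of Bender--Orszag and the existence theorems of Harris--Sibuya; your explicit contraction on the space of bounded sequences is the same mechanism carried out by hand, and your closing points about uniformity of $\xi W(\xi)\to 1$ on $\mathcal S$ and linear independence via the Casoratian are sensible additions.

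One step as written would fail and needs repair. For $P^+$ the product $\Pi_n=\prod_{j=n_0}^{n}\bigl(W_j^2W_{j-1}\bigr)^{-1}$ has step ratio $\Pi_{n+1}/\Pi_n=\bigl(W_{n+1}^2W_n\bigr)^{-1}$, whereas the balance you are trying to realize is $P_{n+1}\approx P_n/(W_n^2W_{n-1})$; these two ratios differ by $W_nW_{n-1}/W_{n+1}^2\sim q^{3}$, which does not tend to zero. Consequently, in your normalized recurrence the coefficient $\varepsilon_n^{(1)}$ of $Q_{n+1}$ tends to $q^{3}$ rather than to $0$, and the bounded solution of the resulting fixed-point problem satisfies $Q_n={\mathcal O}(q^{-3n})$, not $Q_n\to$ a nonzero constant. (You also mis-stepped the product once: $\Pi_{n-1}/\Pi_n=W_n^2W_{n-1}$, not $W_{n-1}^2W_{n-2}$.) The cure is to take the product with upper limit $n-1$, i.e.\ $P^+_n=Q_n\prod_{j=n_0}^{n-1}\bigl(W_j^2W_{j-1}\bigr)^{-1}$; then the $Q_{n+1}$ coefficient is exactly $1$, the two perturbation coefficients are ${\mathcal O}(q^{-3n})$ and ${\mathcal O}(q^{-6n})$, your contraction applies verbatim, and the bookkeeping yields the exponent $\tfrac32 n^2-\tfrac52 n$ of \eqref{p plus} (up to the factor $\xi_0^{3n}$, which the stated form implicitly normalizes away). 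The $P^-$ product in \eqref{p minus} is correctly indexed, so that half of your argument stands as is.
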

\begin{remark}
Since the functions $P^{\pm}$ satisfy a linear homogeneous equation \eqref{eq: lin hom P}, we assume $c_{\pm}=1$ without loss of generality.
\end{remark}
\begin{remark}
The Newton polygon of \eqref{eq: lin hom P} is an equilateral triangle with its base on the interval $[0, 2]$ and remaining sides of slope $3$ and $-3$ respectively. See the definition of Newton polygons and associated theory of asymptotic behaviour of solutions in \S 2.2 of \cite{rsz:12}.
\end{remark}
\begin{remark}\label{P size}
The two solutions $P^{\pm}$ are distinguished by their size in the limit as $n\to\infty$ only in certain regions of the $\xi$-plane. The boundaries of these Stokes-like regions are given by $|P^{\pm}/P^{\mp}|={\mathcal O}(1)$. Letting $\xi=\xi_0\, q^{n}$, and writing $\xi/\xi_0=r\,e^{i\,\theta}$, we find $n=\ln (\xi/\xi_0)/\ln q$ and so the boundaries are given by $(\ln r)^2=\theta^2$. An example of such (anti-) Stokes-like boundaries  (assuming $-\,\pi< \theta\le \pi$) is shown in Figure \ref{fig:stokes}. We denote the region where $|P^{+}/P^{-}|\gg 1$ by $\mathcal E$ and work henceforth in the intersection $\mathcal S\cap \mathcal E$, which we denote once more by the symbol $\mathcal S$. 
\end{remark}
\begin{figure}[t]
\begin{tikzpicture}[auto]
  \draw[->] (-5,0) -- (2,0) node[right] {$\Re{\xi}$};
  \draw[->] (0,-2) -- (0,2) node[above] {$\Im{\xi}$};
  \draw   plot [domain=-3.14159:3.14159,smooth] ({ exp(\x)*cos(\x r)/5 }, { exp(\x)*sin(\x r)/5})  ;
 \draw   plot [domain=-3.14159:3.14159,smooth] ({exp(-\x)*cos(\x r)/5}, { exp(-\x)*sin(\x r)/5})  ;
 \coordinate [label=above:{$r=e^{\theta}$}] (plus) at (-1.5,1.5);
 \coordinate [label=below:{$r=e^{-\theta}$}] (minus) at (-1.5,-1.5);
  \end{tikzpicture}\caption{A Stokes boundary where $|P^{+}/P^{-}|={\mathcal O}(1)$, shown in the $\xi$-plane with $\xi=r\,e^{i\,\theta}$, where $-\pi< \theta\le \pi$. (See Remark \ref{P size}.) The exterior $\mathcal E$ of the combined spirals shown here represents the region where $|P^+|\gg |P^{-}|$.} \label{fig:stokes}
\end{figure}
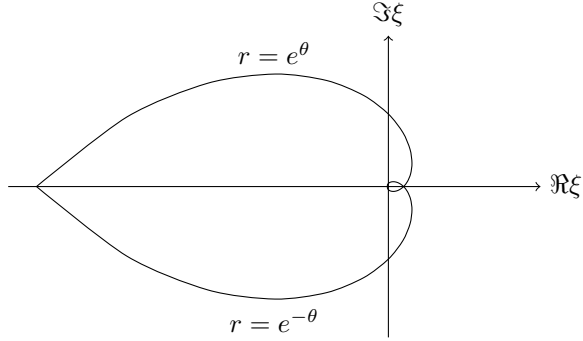
\begin{lemma}\label{summation eq}
Any solution $v$ of Equation \eqref{eq: qp2 pert} satisfies the summation equation
\begin{align}\label{eq: v int}
v_n=\alpha_0\,P_n+P_n\,\sum_{j=n_0+1}^n\,\frac{W_j\,W_{j-1}}{P_j\,P_{j-1}}\,\sum_{k=n_0+1}^{j}\frac{P_k\,{\mathcal R}_2(v_k, v_{k-1},t_k)}{W_{k+1}\,W_{k}}.
\end{align}
where $P_n$ is a solution of the linear homogeneous equation \eqref{eq: lin hom P} and $\alpha_0$, $n_0$ are constants. If we use the convention that the summation is zero when the upper bound is less than the lower bound, then $\alpha_0=v_{n_0}/P_{n_0}$.
\end{lemma}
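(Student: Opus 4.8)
The plan is to establish the summation equation \eqref{eq: v int} by a discrete "reduction of order / variation of parameters" argument applied directly to the linear part of Equation \eqref{eq: qp2 pert}, treating the right-hand side ${\mathcal R}_2(v_n,v_{n-1},t_n)$ as a known inhomogeneous term. First I would rewrite Equation \eqref{eq: qp2 pert} in the form $v_{n+1}+A_n v_n+B_n v_{n-1}=\rho_n$, where $A_n=2W_{n+1}/W_n-1/(W_n^2 W_{n-1})$, $B_n=W_{n+1}/W_{n-1}$, and $\rho_n={\mathcal R}_2(v_n,v_{n-1},t_n)$, so that by Lemma \ref{linear eq} the sequence $P_n$ satisfies the homogeneous recurrence $P_{n+1}+A_nP_n+B_nP_{n-1}=0$.

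The key step is to introduce the substitution $v_n=P_n\,u_n$ (valid where $P_n\neq0$, which holds for large $n$ by the asymptotics in Lemma \ref{linear eq}) and to extract a telescoping first-order recurrence for the first difference of $u_n$. Substituting and using the homogeneous relation for $P_n$ cancels the $u_n$ terms, leaving an equation of the shape $P_{n+1}(u_{n+1}-u_n)-B_nP_{n-1}(u_n-u_{n-1})=\rho_n$. Recognizing that $B_nP_{n-1}=(W_{n+1}/W_{n-1})P_{n-1}$ and that the natural "discrete Wronskian" combination is $W_{k+1}W_k$, I would check that multiplying through by an appropriate product of the $W_j$'s makes the bracketed quantity exactly telescoping; concretely, setting $\Delta_n=P_{n+1}(u_{n+1}-u_n)\big/(W_{n+1}W_n)$ one obtains $\Delta_n-\Delta_{n-1}=\rho_n/(W_{n+1}W_n)$, so summing from $n_0+1$ to $j$ gives $\Delta_j=\Delta_{n_0}+\sum_{k=n_0+1}^{j}\rho_k/(W_{k+1}W_k)$. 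The residual constant $\Delta_{n_0}$ can be absorbed: either it is forced to vanish by demanding the asymptotic matching $v_n\sim0$ relative to the growth of $P^+$, or it is carried along and shown to contribute only a multiple of the homogeneous solution, which is the origin of the $\alpha_0 P_n$ term.

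Then I would unwind the substitution: from $\Delta_j$ one solves for $u_{j+1}-u_j=(W_{j+1}W_j/P_{j+1})\Delta_j$, shift the index to write $u_j-u_{j-1}=(W_jW_{j-1}/P_j)\bigl(\sum_{k=n_0+1}^{j}P_k{\mathcal R}_2(v_k,v_{k-1},t_k)/(W_{k+1}W_k)\bigr)$, and sum once more from $n_0+1$ to $n$ to recover $u_n-u_{n_0}$. Multiplying by $P_n$ and writing $\alpha_0=u_{n_0}=v_{n_0}/P_{n_0}$ yields exactly \eqref{eq: v int}, with the double sum $\sum_{j}\sum_{k}$ and the weight $W_jW_{j-1}/(P_jP_{j-1})$ appearing automatically once the index on $u_j-u_{j-1}$ is matched with the denominator $P_j$ inside and the telescoped factor $P_{j-1}$ shifted in. Conversely, one verifies that any sequence satisfying \eqref{eq: v int} solves Equation \eqref{eq: qp2 pert} by applying the second-order difference operator $v\mapsto v_{n+1}+A_nv_n+B_nv_{n-1}$ to the right-hand side and using that $P_n$ annihilates the homogeneous part and that the double summation collapses under the discrete Leibniz rule back to ${\mathcal R}_2(v_n,v_{n-1},t_n)$.

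The main obstacle is purely bookkeeping: getting the telescoping combination exactly right, in particular pinning down which shifted products $W_{k+1}W_k$ versus $W_jW_{j-1}$ appear and confirming that the homogeneous solution $P_n$ used in the inner and outer sums is consistently the \emph{same} chosen solution (here $P=P^-$ should be the one that makes $W_jW_{j-1}/(P_jP_{j-1})$ summable, while the prefactor $P_n$ out front may be either). I do not anticipate any genuine analytic difficulty at this stage — convergence of the sums and the contraction estimates are deferred to the subsequent lemmas — so the statement is essentially an algebraic identity; care is only needed to ensure the boundary terms at $n=n_0$ are correctly identified as the stated constant $\alpha_0=v_{n_0}/P_{n_0}$ under the stated summation convention.
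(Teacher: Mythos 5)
Your proposal is correct and follows essentially the same route as the paper: the paper multiplies Equation \eqref{eq: qp2 pert} by $P_n/(W_{n+1}W_n)$ to exhibit the exact difference of the discrete Wronskian $(P_{n-1}v_n-P_nv_{n-1})/(W_nW_{n-1})$, sums once, divides by $P_nP_{n-1}/(W_nW_{n-1})$ to get the exact difference of $v_n/P_n$, and sums again --- which is precisely your reduction-of-order substitution $v_n=P_nu_n$ followed by two telescoping summations. The only slip is in your intermediate definition $\Delta_n=P_{n+1}(u_{n+1}-u_n)/(W_{n+1}W_n)$, which should carry the extra factor $P_n$ (i.e.\ $\Delta_n=P_{n+1}P_n(u_{n+1}-u_n)/(W_{n+1}W_n)$) for the telescoping $\Delta_n-\Delta_{n-1}=P_n\rho_n/(W_{n+1}W_n)$ to hold; your final unwound formula already reflects the correct weights, so this is the bookkeeping issue you flagged rather than a gap.
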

\begin{corollary}
The proof of the above result also gives the equivalent summation equation:
\begin{align}\label{eq: v int reverse}
v_n=\beta_0\,P_n-\,P_n\,\sum_{j=n}^{n_0-1}\,\frac{W_j\,W_{j-1}}{P_j\,P_{j-1}}\,\sum_{k=k_0}^{j-1}\frac{P_k\,{\mathcal R}_2(v_k, v_{k-1},t_k)}{W_{k+1}\,W_{k}}
\end{align}
where $k_0$ and $n_0$ are constants, with $\beta_0=v_{n_0}/P_{n_0}$.
\end{corollary}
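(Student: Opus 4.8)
The plan is to repeat the discrete variation-of-parameters computation that produced \eqref{eq: v int} in Lemma \ref{summation eq}, but running both of its nested summations from the opposite end. Write Equation \eqref{eq: qp2 pert} in the form $v_{n+1}+A_n\,v_n+B_n\,v_{n-1}=\mathcal{R}_2(v_n,v_{n-1},t_n)$, with $A_n=2W_{n+1}/W_n-1/(W_n^2\,W_{n-1})$ and $B_n=W_{n+1}/W_{n-1}$, and recall that $P_n$ solves the homogeneous equation \eqref{eq: lin hom P}. Multiplying the two equations by $P_n$ and by $v_n$ respectively and subtracting eliminates the $A_n$ term and shows that the discrete Wronskian $D_n:=P_n\,v_{n-1}-P_{n-1}\,v_n$ obeys a first-order recurrence of the shape $D_{n+1}-B_n\,D_n=\pm\,P_n\,\mathcal{R}_2(v_n,v_{n-1},t_n)$. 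Since $\prod_j B_j=\prod_j W_{j+1}/W_{j-1}$ telescopes to a quotient of the form $W_\bullet W_\bullet/(W_\bullet W_\bullet)$, this recurrence is solved by a single summation, after which dividing $D_n$ by $P_n P_{n-1}$ turns it into the first difference of $v_n/P_n$; a second summation then recovers $v_n$. The two identities \eqref{eq: v int} and \eqref{eq: v int reverse} differ only in where these two summations are anchored.

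Concretely, for \eqref{eq: v int reverse} I would solve the Wronskian recurrence with its particular solution based at the constant index $k_0$, which is precisely what produces the inner sum $\sum_{k=k_0}^{j-1}P_k\,\mathcal{R}_2(\cdots)/(W_{k+1}W_k)$ multiplied by the telescoped factor $W_j W_{j-1}$; changing $k_0$ only alters the additive homogeneous term of $D_j$, a constant multiple of $W_jW_{j-1}$. I would then carry out the outer summation of the first differences of $v_j/P_j$ from $j=n$ up to $j=n_0-1$, rather than from $j=n_0+1$ up to $j=n$ as in \eqref{eq: v int}. The elementary reindexing $\sum_{j=n_0+1}^{n}(\cdots)=-\sum_{j=n+1}^{n_0}(\cdots)$, valid whatever the position of $n$ relative to $n_0$ under the stated convention that a sum with upper index below its lower index vanishes, is exactly what places the overall minus sign in front of the double sum and produces the boundary value $\beta_0=v_{n_0}/P_{n_0}$, just as $\alpha_0=v_{n_0}/P_{n_0}$ arose for \eqref{eq: v int}. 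Every residual term generated along the way that is not of the displayed summation form is a solution of \eqref{eq: lin hom P}, and these, together with the boundary contributions, are collected into the single term $\beta_0\,P_n$.

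The result is a purely algebraic identity satisfied by every solution $v$ of \eqref{eq: qp2 pert}: unlike the contraction-mapping argument in Appendix \ref{app true}, it involves no estimates, no convergence, and no analyticity, so there is no analytic obstacle to overcome. The only place that genuinely requires care is the index bookkeeping --- correctly handling empty and reversed sums under the stated convention, tracking the off-by-one shifts between the Wronskian index and the telescoped $W$-factors, and verifying that the leftover homogeneous contribution is proportional to $P_n$ alone so that it may legitimately be folded into $\beta_0\,P_n$. I expect that, once these choices of anchor are fixed, the verification is short.
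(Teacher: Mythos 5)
Your proposal is correct and follows essentially the same route as the paper: the paper likewise multiplies the perturbation equation by $P$ (after using the homogeneous equation for $P$ to absorb the middle coefficient), recognizes the resulting left-hand side as an exact difference of the Wronskian-type quantity $(\underline P\,v-P\,\underline v)/(W\,\underline W)$ --- your telescoping of $\prod B_j$ is exactly this division by $W\,\underline W$ --- and then sums twice, the corollary being obtained simply by anchoring the sums at the opposite ends. The only cosmetic difference is that you phrase the reduction as a discrete Wronskian recurrence rather than as recognizing exact differences directly.
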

An application of the contraction mapping theorem to the summation equation \eqref{eq: v int reverse} gives a desired true solution, along a semi-infinite spiral path ${\mathcal T}$, as shown by the following result.  
\begin{lemma}\label{contraction}
Suppose constants $\delta_0$, $\epsilon_0$, $C_1$, $C_2$, $C_3$, and $C_4$ are as found in Lemma \ref{perturbation}. Let $0<\delta<\delta_0$, $0<\epsilon<\epsilon_0$, $t_0$ be complex numbers, with $|t|<\epsilon$, for every $t\in {\mathcal T}$, where $t_0\in D_0$ a disk in $\complex$, $n_0=-\ln(|t_0|)/\ln(|q|)>2$, and ${\mathcal T}=t_0\,q^{-\,(n_0+\real^+)}$, satisfying $|q|\bigl(C_1\delta^2+C_2\epsilon\bigr) < \delta$ and $|q|\bigl(C_3\,\delta+ C_4\,\epsilon\bigr) <1$, with $1/t\in\mathcal S$ as defined in Remark \ref{P size}. Then there exists a solution $v(t)$ of Equation \eqref{eq: qp2 pert}, which is defined for all $t\in {\mathcal T}$ and satisfies $\|v(t)\|\le \delta$ for all $t\in {\mathcal T}$.
\end{lemma}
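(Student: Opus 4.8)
The plan is to recast Equation \eqref{eq: qp2 pert}, via Lemma \ref{summation eq} and its Corollary, as a fixed-point problem and to apply the Banach contraction mapping theorem. Let $\mathcal{F}$ be the operator sending a function $v=(v_n)$ on the spiral $\mathcal{T}$ to the right-hand side of the summation equation \eqref{eq: v int reverse}, where $P_n$ is a solution of the linear homogeneous equation \eqref{eq: lin hom P} and the free constant are chosen so that the homogeneous term $\beta_0P_n$ is bounded on $\mathcal{S}\subset\mathcal{E}$ and the kernel below is controlled. I work in the complete metric space
\[
\mathcal{B}_\delta=\Bigl\{v:\mathcal{T}\to\complex^2\ \Bigm|\ v\ \text{continuous},\ \|v\|_\infty:=\sup_{t\in\mathcal{T}}\|v(t)\|\le\delta\Bigr\},
\]
with the metric induced by $\|\cdot\|_\infty$. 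Since $\delta<\delta_0$ and $|t|<\epsilon<\epsilon_0$ on $\mathcal{T}$, the arguments $(v_k,v_{k-1},t_k)$ at which $\mathcal{R}_2$ is evaluated lie in its holomorphy domain $\mathcal{D}$, so $\mathcal{F}v$ is well defined and continuous on $\mathcal{T}$; a fixed point $v=\mathcal{F}v$ in $\mathcal{B}_\delta$ is then, by the derivation of \eqref{eq: v int reverse}, a continuous solution of \eqref{eq: qp2 pert} on all of $\mathcal{T}$ with $\|v(t)\|\le\delta$, which is the assertion. That such a fixed point satisfies the three-term recursion \eqref{eq: qp2 pert} at \emph{every} point of $\mathcal{T}$, not only on the lattice $\{t_0q^{-(n_0+m)}\}_{m\in\mathbb N}$, is because \eqref{eq: qp2 pert} couples only points lying on a common $q$-spiral and $\mathcal{T}$ is an arc of such a spiral.

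The heart of the matter is a uniform bound on the iterated-sum kernel appearing in \eqref{eq: v int reverse}, of the schematic form
\[
G(n,k)=P_n\,\frac{P_k}{W_{k+1}W_k}\,\sum_{j}\frac{W_jW_{j-1}}{P_jP_{j-1}},
\]
the $j$-sum taken over the range dictated by \eqref{eq: v int reverse}. Using $W_m\sim b_1/\xi_m$ together with the asymptotics $P^{+}(\xi)\sim q^{3n(n-5/3)/2}$ and $P^{-}(\xi)\sim q^{-3n(n+5/3)/2}$ of Lemma \ref{linear eq}, one checks that inside $\mathcal{S}\cap\mathcal{E}$ the summands of both the inner and the outer sum are concentrated at single leading terms (the remaining terms being suppressed super-geometrically, reflecting that the series is $q$-Gevrey of order $3$), that these leading terms cancel against the prefactor $P_n$, and hence that $\sup_n\sum_k|G(n,k)|\le|q|$ uniformly along $\mathcal{T}$. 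Combining this with the bounds $\|\mathcal{R}_2(x,y,t)\|\le C_1\|\mathbf{v}\|^2+C_2|t|$ and $\|\nabla\mathcal{R}_2\|\le C_3\|\mathbf{v}\|+C_4|t|$ of Lemma \ref{perturbation}, and with $|t|<\epsilon$ on $\mathcal{T}$, gives for $v,\tilde v\in\mathcal{B}_\delta$
\begin{align*}
\|\mathcal{F}v\|_\infty&\le|q|\bigl(C_1\delta^2+C_2\epsilon\bigr),\\
\|\mathcal{F}v-\mathcal{F}\tilde v\|_\infty&\le|q|\bigl(C_3\delta+C_4\epsilon\bigr)\,\|v-\tilde v\|_\infty .
\end{align*}
The hypotheses $|q|(C_1\delta^2+C_2\epsilon)<\delta$ and $|q|(C_3\delta+C_4\epsilon)<1$ then say precisely that $\mathcal{F}$ maps $\mathcal{B}_\delta$ into itself and is a contraction there, so it has a unique fixed point $v$, completing the proof.

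The main obstacle is the uniform kernel bound $\sup_n\sum_k|G(n,k)|\le|q|$. In contrast with the classical differential-equation setting, the two summations here carry quadratically-exponentially large weights coming from $P^{\pm}$, so one must simultaneously (i) show the tails of both the inner and the outer sum are negligible, (ii) extract exactly the cancellation between their leading terms and the prefactor $P_n$ so that only the harmless factor $|q|$ survives, and (iii) carry this out uniformly in the continuous spiral parameter along $\mathcal{T}=t_0q^{-(n_0+\real^+)}$ rather than merely on the lattice. Steps (i)--(ii) can only succeed inside the Stokes-like region $\mathcal{E}$ of Remark \ref{P size}, where $|P^{+}/P^{-}|\gg1$: outside $\mathcal{E}$ the two homogeneous solutions are comparable and the kernel ceases to be bounded, which is exactly why the construction was restricted to $\mathcal{S}\cap\mathcal{E}$. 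A subsidiary point, already handled by the choice of $\delta<\delta_0$, $\epsilon<\epsilon_0$, is to keep the arguments of $\mathcal{R}_2$ within $\mathcal{D}$ throughout the iteration.
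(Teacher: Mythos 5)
Your proposal is correct and follows essentially the same route as the paper: recasting \eqref{eq: qp2 pert} via the summation equation \eqref{eq: v int reverse} (with the homogeneous term removed by pushing $n_0\to\infty$), bounding the double-sum kernel using the asymptotics of $P^{+}$ and $W$ inside $\mathcal S\cap\mathcal E$ so that the total weight is at most $|q|$, and then applying the contraction mapping theorem on the space of continuous functions on $\mathcal T$ with sup-norm at most $\delta$, using the estimates on $\mathcal R_2$ and $\nabla\mathcal R_2$ from Lemma \ref{perturbation}. The identification of the kernel bound as the crux, and of the restriction to the region where $|P^{+}/P^{-}|\gg 1$ as essential, matches the paper's argument.
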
 
\begin{remark}
Similar conclusions hold when the path ${\mathcal T}$ is replaced by a curve $t_0\,q^{-\,\tau+i\,\sigma(\tau)}$, where $\tau\in [n_0, \infty)$, $\sigma: [n_0, \infty)\mapsto \real$ is continuously differentiable with bounded derivative and the curve stays sufficiently far away from the origin with $1/t\in\mathcal S$.
\end{remark}
\begin{remark}
The linear equation \eqref{eq: lin hom P} is the linear part of the perturbation equation of a solution $w$ of $q\Pon$ that has leading-order behaviour $1/\xi$. That is, the generic perturbations of the solution found in Lemma \ref{contraction} will involve linear combinations of $P^\pm$. Since one of these grows exponentially fast in $\xi$, it follows that the solution of Lemma \ref{contraction} is unstable.
\end{remark}

\section{Conclusion} \label{conclusion}
In this paper, we deduced asymptotic expansions of four families of solutions $w(\xi)$ of $q\Pon$ \eqref{eq: qp1 scaled} that are stationary to leading-order in the limit $\xi\to\infty$. In three of these cases, the solution satisfies $w_n\sim \omega$, where $\omega^3=1$, whilst in the remaining case it satisfies $w_n\sim 1/\xi$, as $\xi\to\infty$. The cases when the solutions approach a cube root of unity correspond to singularities of the invariant $K$ (see Equation \eqref{eq: ham auto}), while the vanishing case plays an altogether different role in the space of initial values.

The forward iterate $w_{n+1}$, given by
\[w_{n+1}=\frac{w_n-1/\xi}{w_{n-1} \,w_n^2}\]
remains well defined as $w_n\to \omega$ and $w_{n-1}\to\omega$.  However, when $w_n\to 1/\xi$ and $w_{n-1}\sim q/\xi$, $w_{n+1}$ approaches $0/0$. That is, the solution approaches a base point, where the dynamics of the flow defined by $q\Pon$ become ambiguous. The flow around an isolated such point could have been resolved by standard techniques from algebraic geometry, however, in this case, two base points (one from the forward iteration and one from backward iteration) coalesce in the asymptotic limit, creating a more complicated flow near the origin.

Because of this feature, we focused on the vanishing behaviour in this paper. We showed that the corresponding asymptotic series is divergent by deducing the behaviour of its coefficients as the summation index becomes large. Furthermore, we proved that for initial values on each sufficiently large smooth path in $\mathcal S$, there exists a true vanishing solution, unstable in initial-value space, which is asymptotic to the formal power series. Its instability, and existence near a coalescing pair of base points, led us to allocate the solution a distinguished name, i.e. {\it quicksilver} solution, in the sense of being quickly changeable and difficult to hold or contain.
\appendix
\section{Proofs of Results in \S\ref{expansion}}\label{app expansion}
In the derivation of the recursion relations in \S \ref{expansion}, we use the following standard result on products of formal series.
\begin{lemma}\label{convolution lemma}
Suppose $F=\sum_{k=k_0}^\infty F_k/\xi^k$ and $G=\sum_{m=m_0}^\infty G_m/\xi^m$, where $k_0$ and $m_0$ are non-negative integers. Then the product $F\,G$ has the series expansion
\[
F\,G=\sum_{n=k_0+m_0}^\infty\,\frac{H_n}{\xi^n}
\] 
where $H_k=\sum_{k=k_0}^{n-m_0}\,F_k\,G_{n-k}$.
\end{lemma}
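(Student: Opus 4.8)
The statement is the standard Cauchy product formula for formal Laurent-type series whose exponents are bounded below by non-negative integers, so the plan is simply to multiply the two series term-by-term and collect powers of $1/\xi$. First I would write out the formal product
\[
F\,G=\left(\sum_{k=k_0}^\infty \frac{F_k}{\xi^k}\right)\left(\sum_{m=m_0}^\infty\frac{G_m}{\xi^m}\right)=\sum_{k=k_0}^\infty\sum_{m=m_0}^\infty\frac{F_k\,G_m}{\xi^{k+m}},
\]
which is legitimate as a manipulation of formal series. Then I would reindex by setting $n=k+m$, observing that the smallest value of $n$ occurring is $k_0+m_0$ (attained only for $k=k_0,\ m=m_0$), and that for each fixed $n\ge k_0+m_0$ the pairs $(k,m)$ with $k+m=n$, $k\ge k_0$, $m\ge m_0$ are exactly those with $k_0\le k\le n-m_0$. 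Collecting the coefficient of $\xi^{-n}$ then gives
\[
F\,G=\sum_{n=k_0+m_0}^\infty\frac{1}{\xi^n}\sum_{k=k_0}^{n-m_0}F_k\,G_{n-k},
\]
which is the claimed expansion with $H_n=\sum_{k=k_0}^{n-m_0}F_k\,G_{n-k}$ (the printed index typo notwithstanding).

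There is essentially no obstacle here: the only point requiring care is that the double sum is a \emph{finite} sum for each fixed power of $\xi$, so no questions of convergence or rearrangement arise — this is the whole reason the hypothesis that $k_0,m_0$ are non-negative (indeed, merely bounded below) is imposed. I would state this finiteness observation explicitly so that the reindexing step is visibly justified at the level of formal series, and then the proof is complete. If one wished, one could add a one-line remark that the same identity holds verbatim for convergent series in a common annulus of convergence, by absolute convergence, but for the applications in \S\ref{expansion} the purely formal statement suffices.
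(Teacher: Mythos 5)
Your proof is correct: the paper states this lemma as a standard fact about products of formal series and gives no proof at all, and your Cauchy-product argument with the reindexing $n=k+m$ (together with the observation that each coefficient is a finite sum, so no convergence issues arise) is exactly the standard justification one would supply. You are also right that the displayed coefficient formula contains a typo and should read $H_n=\sum_{k=k_0}^{n-m_0}F_k\,G_{n-k}$.
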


\subsection{Proof of Corollary \ref{modulo 3}}
\begin{proof}
We prove the result by induction. The case $p=0$ is already true. Now consider $p\ge 1$. Assume $b_{3\,i+2}=0$ and $b_{3\,i+3}=0$ for all integers $0\le i\le p-1$ and consider the inductive step $i=p$. Denote each of the convolution sums in Equation \eqref{z formal rec n } respectively by 
\begin{subequations}
\begin{align}
\label{convolution A}
A_{r}&=\sum_{k=1}^{r-1}\,b_k\,b_{r-k} q^{(r-2\,k)}\\
\label{convolution B}
B_{n-r}&=\sum_{m=1}^{n-r-1}\,b_m\,b_{n-r-m}
\end{align}
\end{subequations}
Note that we have $b_{n}=\sum_{r=2}^{n-2}A_r\,B_{n-r}$.

Consider the case $n=3\,p+2$. For a  term $b_k\,b_{r-k}$ of $A_r$ to be non-zero, there must exist integers $j$ and $l$,  such that $k=3\,j+1$ (satisfying $1\le 3\,j+1\le r-1$) and $r-k=3\,l+1$ (satisfying $2\le 3\,(j+l)+2\le n-2$). Adding these two equations, we obtain $r=3\,(j+l)+2$, in which case we obtain $n-r=3(p-\,j-\,l)$.

Applying the same reasoning to the sum $B_{n-r}$, there must exist integers $a$ and $b$, such that $m=3\,a+1$ and $n-r-m=3\,b+1$ (satisfying $1\le (3\,a+1)\le n-r-1$). Summing these we obtain $n-r=3\,(a+b)+2$. This cannot hold simultaneously with $n-r$ being a multiple of $3$. So we have a contradiction. A similar contradiction arises when $n=3\,p+3$.

Therefore, when $n$ is either $3\,p+2$ or $3p+3$, for any value of $r$ s.t. $2\le r\le n-2$, a non-zero term occurring in $A_r$ cannot multiply a non-zero term in $B_{n-r}$ . This completes the inductive step.
\end{proof}
\subsection{Proof of Lemma \ref{z large n}}
\begin{proof}
Note that since $b_1$ is positive, if $q$ is real and positive, then all subsequent $b_n$ are also positive. Let $d_p:=b_{3p+1}$. Then the recursion relation \eqref{z formal rec n } can be rewritten as
\begin{align}\label{staggered rec}
d_{p}&=\sum_{i=0}^{p-1}\sum_{j=0}^{i}\sum_{l=1}^{p-1-i}\,d_j\,d_{i-j}\,d_l\,d_{p-1-i-l}\, q^{3\,(i-2j)}
\end{align}
(Here we redefined $m=3\,l+1$, $k=3\,j+1$, $r=3\,i+2$, in order to have non-zero $b_m$, $b_k$, $b_{n-r-m}$  and $b_{r-k}$.) Note that this is symmetric under $q\mapsto 1/q$. 
Define  
\[
c_{p}:=d_{p}\, q^{-\,3\,p\,(p-1)/2}.
\]
Notice that $c_0=1$ and that because $d_1=b_4=1$, we also have $c_1=1$. The recursion relation \eqref{z formal rec n } becomes
\begin{align}\label{scaled rec}
c_{p}&=\sum_{i=0}^{p-1}\sum_{j=0}^{i}\sum_{l=0}^{p-1-i}\,c_j\,c_{i-j}\,c_l\,c_{p-1-i-l}\,\, q^{3\,\bigl(D(i, j, l, p)-p\,(p-1)\bigr)/2}
\end{align}
where $D(i, j, l, p):=j\,(j-1)+(i-j)\,(i-j-1)+2\,(i-2j)+l\,(l-1)+(p-1-i-l)\,(p-1-i-l-1)$. This is a quadratic function of $i$ and $j$. For later reference, we note that
\begin{subequations}\label{D estimates}
\begin{align}
& 1\le j\le p-2\ \Rightarrow\ D(p-1, j, 0, p)-p(p-1)\le -2\,p \\
&1\le l\le p-2\ \Rightarrow\  D(0,0,l,p)-p(p-1)\le -\,4p+6 \\
\nonumber&0\le l\le p-1-i, 0\le j\le i, 1\le i\le p-2 \\
&\qquad\qquad\qquad \Rightarrow\ D(i, j, l, p)-p(p-1)\le -2p+2
\end{align}
\end{subequations}
These are proved by showing that the function is convex with a local minimum in the middle of each respective interval of indices.

We rewrite Equation \eqref{scaled rec} as
\begin{equation}\label{scaled rec again}
\begin{split}
c_{p}&=c_{p-1}\left(1+2\,q^{-3(p-1)}+q^{-6(p-1)}\right)  
\\
&\qquad+\sum_{j=1}^{p-2}\,c_j\,c_{p-1-j}\,q^{3\,\bigl(D(p-1, j, 0, p)-p\,(p-1)\bigr)/2} + \sum_{l=1}^{p-2}\,c_l\,c_{p-1-l}\,q^{3\,\bigl(D(0, 0, l, p)-p\,(p-1)\bigr)/2}\\
&\qquad\qquad +\sum_{i=1}^{p-2}\sum_{j=0}^{i}\sum_{l=0}^{p-1-i}\,c_j\,c_{i-j}\,c_l\,c_{p-1-i-l} q^{3\,\bigl(D(i, j, l, p)-p\,(p-1)\bigr)/2}
\end{split}
\end{equation}
which suggests the transformation
\begin{equation}\label{c to f}
c_p = f_p\,\prod_{k=0}^{p-1}\bigl(1+q^{-3k}\bigr)^2 = f_p\, { (-1;r)_p }^2
\end{equation}
where $r=q^{-3}$ and we have used the notation for the $q$-shifted factorial
\[
(a;q)_n=\left\{\begin{array}{ll}
		(1-a)\,(1-a\,q)\ldots (1-a\,q^{n-1})&n=1, 2, \ldots \\
		1&n=0\\
		\end{array}\right.
\]
Notice that $f_1=1/4$ and $|f_3|<6/5$. Below we use an abbreviation for numbers that are similar to $q$-binomial coefficients
\[
\mathsf{B}^n_k= \left(\frac{ (-1; r)_n}{(-1; r)_k\,(-1; r)_{n-k}}\right)^2
\]
As for usual binomial coefficients, for each fixed $n$, $\mathsf{B}^n_k$ has a local maximum at the half-way point $k=\lfloor n/2\rfloor$. So we get
\begin{eqnarray}\label{q binomial}
\mathsf{B}^n_k &\ge \left\{\begin{array}{l} \mathsf{B}^n_1 =\left( \frac12(1+r^{n-1})\right)^2\ \textrm{for}\ k\ge 1\\
                                                           \mathsf{B}^n_0 =1\ \textrm{for}\ k\ge 0\\
                                                           \end{array}\right.
\end{eqnarray}

The transformation \eqref{c to f} gives 
\begin{equation}\label{scaled rec in f}
\begin{split}
f_{p}&=f_{p-1}\\
&\quad+\frac{1}{(1+q^{-3(p-1)})^2}\Biggl\{ \sum_{j=1}^{p-2} \,\frac{f_j\,f_{p-1-j}}{ \mathsf{B}^{p-1}_j } \,q^{3\,\bigl(D(p-1, j, 0, p)-p\,(p-1)\bigr)/2}\\
&\quad\qquad\qquad +  \sum_{l=1}^{p-2}\,\frac{f_l\,f_{p-1-l}}{ \mathsf{B}^{p-1}_l} \,q^{3\,\bigl(D(0, 0, l, p)-p\,(p-1)\bigr)/2}\\
&\quad\qquad\qquad +\sum_{i=1}^{p-2} \sum_{j=0}^{i}\sum_{l=0}^{p-1-i}\,\frac{f_j\,f_{i-j}}{{ \mathsf{B}^{i}_j} } \frac{f_l\,f_{p-1-i-l}}{{ \mathsf{B}^{p-1-i}_l} } \frac{q^{3\,\bigl(D(i, j, l, p)-p\,(p-1)\bigr)/2}}{{ \mathsf{B}^{p-1}_i} }\Biggr\}
\end{split}
\end{equation}
Assume $|q|\ge e^2$. We now prove inductively that the sequence $\{f_p\}$, for $p\ge 2$, is upper bounded by a constant $\phi$, where $\phi$ is a real root of the quartic $\frac65-\,\phi +\frac32 e^{-4}\phi^2+\frac32 e^{-4}\,\phi^4$, given approximately by $\phi\sim 1.33685$. 

Suppose that $|f_k|\le \phi$ for $k=1, \ldots , p-1$. Using the estimates \eqref{D estimates}, along with  \eqref{q binomial}, we find the upper bound 
\begin{align*}
|f_p|\,-|f_{p-1}|\le |f_p-f_{p-1}|&\le  8\,\phi^2\,\frac{ (p-2)\,\bigl(|q|^{-3(p-2)}+|q|^{-3\,(2p-3)}\bigr)+8\,\phi^2\,p(p-1)(2p-3)|q|^{-3\,(p-1)} }{\bigm|(1+q^{-3(p-1)})^2(1+q^{-3(p-2)})^2\bigm|} \\
                        &\le \frac{3}{2}\,\phi^2\,|q|^{-2(p-2)}\,\frac{ \bigl(1+|q|^{-5\,(p-7/5)}\bigr)+2\,\phi^2\,}{\bigm|(1+q^{-3(p-1)})^2(1+q^{-3(p-2)})^2\bigm|} 
\end{align*}
where we have used the fact that $8(p-2)|q|^{-(p-2)}<8\,e^{-2}<3/2$, $64\,p(p-1)(2p-3)|q|^{-(p-2)}<1152\,e^{-6}<3$ for $p\ge 3$. We sum this inequality over $p$ and use
\begin{align*}
\sum_{k=3}^p\,\frac{|q|^{-2(k-2)}\bigl(1+|q|^{-5\,(k-7/5)}\bigr)}{\bigm|(1+q^{-3(k-1)})^2(1+q^{-3(k-2)})^2\bigm|}
&\le \sum_{k=3}^p\,\frac{|q|^{-2(k-2)}\bigl(1+|q|^{-2\,(k-2)}\bigr)}{(1-|q|^{-2(k-1)})^4}\\
&\le \frac{1}{\log|q|}\,\int_{0}^{q^{-2}}\frac{1+x}{(1-x^2)^4}\,dx\\
&=\frac{1}{6\log|q|}\,\left(\frac{(1+3\,|q|^{-2})}{(1-|q|^{-2})^3}-1\right)\\
&\le \frac{2\,|q|^{-2}}{\log|q|}
\end{align*}
Similarly,
\begin{align*}
\sum_{k=3}^p\,\frac{|q|^{-2(k-2)}}{\bigm|(1+q^{-3(k-1)})^2(1+q^{-3(k-2)})^2\bigm|}
&\le \frac{|q|^{-2}}{2\log|q|}
\end{align*}
Therefore we get
\begin{align*}
|f_p|&\le |f_3|+ \frac32 e^{-4}\phi^2+\frac32 e^{-4}\,\phi^4\le \frac65 +\frac32 e^{-4}\phi^2+\frac32 e^{-4}\,\phi^4 =\phi
\end{align*}
which completes the proof.
\end{proof}
\section{Proofs of Results in \S \ref{true}}\label{app true}
\subsection{Proof of Lemma \ref{perturbation}}
\begin{proof}
Under the transformation $w=W(\xi)+v(\xi)$, Equation \eqref{eq: qp1 scaled} becomes
\begin{align}\label{eq: v eq}
\overline v &= {\mathcal R}(v, \underline v, t)
\end{align}
where $t=1/\xi$ and
\begin{align*}
{\mathcal R}(v, \underline v, t) &=\frac{W+v-t}{(\underline W+\underline v)\,(W+v)^2}-\overline W 
\end{align*}
Note that the function ${\mathcal R}(x, y, t)$ satisfies
\begin{align*}
{\mathcal R}(0, 0, t) &={\mathcal O}\bigl(|t|^N\bigr),\ \forall N\in{\mathbb N}\\
\frac{\partial {\mathcal R}}{\partial x} (0, 0, t)&=\frac{\overline W}{W}\,\left(\frac{1}{\overline W\,W\,\underline W}-2\right)\\
\frac{\partial {\mathcal R}}{\partial y} (0, 0, t)&=-\,\frac{\overline W}{\underline W}
\end{align*}
Rewrite Equation \eqref{eq: v eq} as
\begin{align}\label{eq: v eq 2}
\overline v + \,\left(2\,\frac{\overline W}{W}-\frac{1}{W^2\,\underline W}\right)\,v+\frac{\overline W}{\underline W}\,\underline v &= {\mathcal R}_2(v, \underline v, t)
\end{align}
where 
\begin{align}
{\mathcal R}_2(v, \underline v, t)={\mathcal R}(v, \underline v, t)-\frac{\overline W}{W}\,\left(\frac{1}{\overline W\,W\,\underline W}-2\right)\,v+\frac{\overline W}{\underline W}\,\underline v
\end{align}
has a Taylor expansion that starts with an arbitrarily small constant term and quadratic terms in $v$, $\underline v$, for $(v, \underline v, t)\in \mathcal D$, for some non-zero $\epsilon_0$ and $\delta_0$. The estimates for ${\mathcal R}_2$ follow from this fact.
\end{proof}
\subsection{Proof of Lemma \ref{linear eq}}
\begin{proof}
Because we have $W\sim 1/\xi$, as $|\xi|\to\infty$, the largest coefficient in Equation \ref{eq: lin hom P} is $1/(W^2\,\underline W)$. We use this term as the starting point of two recursive processes as follows
\begin{align*}
&\overline P\,\sim \frac{P}{W^2\,\underline W}+{\mathcal O}(P,\,\underline P)\\
\Rightarrow\ 
P^+(\xi)&\sim \prod_{j=n_0}^n\frac{1}{W_j^2\,W_{j-1}}
\end{align*}
and
\begin{align*}
&P\,\sim \underline P\,W^2\,\overline W+{\mathcal O}(P,\,\overline P)\\
\Rightarrow\ 
P^-(\xi)&\sim \prod_{j=n_0}^n\,W_{j+1}\,W_j^2
\end{align*}
We get the remaining results in Equations \eqref{pasym} by applying the asymptotic methods provided in Chapter 5 \cite{bo:78} for irregular-singular-type limits of linear difference equations. In particular, the transformation of variables: $P^{\pm}=q^{\pm 3 n^2/2}\,R^{\pm}$ leads to the leading order results: $R^+\sim \rho_+\,q^{-5\,n/2}$, $R^-\sim \rho_-\,q^{-5\,n/2}$ for some constants $\rho_{\pm}$. Applying standard theorems \cite{hs:64}, we get true solutions with these behaviours, which we continue to denote by $P^{\pm}$.
\end{proof}

\subsection{Proof of Lemma \ref{summation eq}}
\begin{proof}

Multiplying Equation \eqref{eq: v eq 2} by $P$, a solution of Equation \eqref{eq: lin hom P}, we get
\[
\frac{P\,\overline v}{\overline W\,W}-\,\frac{\underline P\, v}{W\,\underline W}
-\,\frac{\overline P\, v}{\overline W\,W}+\,\frac{ P\, \underline v}{W\,\underline W}=\frac{P\,{\mathcal R}_2}{\overline W\,W}
\]
Since the left side is an exact difference, we integrate (i.e., sum) to get
\begin{equation}\label{forward sum 1}
\frac{\underline P\, v}{W\,\underline W}-\,\frac{P\,\underline v}{W\,\underline W}=\sum_{k=k_0}^{n-1}\frac{P_k\,{\mathcal R}_2(v_k, v_{k-1},t_k)}{W_{k+1}\,W_{k}}+\left(\frac{\underline P\, v}{W\,\underline W}-\,\frac{P\,\underline v}{W\,\underline W}\right)\Bigm|_{n=k_0}
\end{equation}
Notice that we could equally well have summed in the other direction, to get
\begin{equation}\label{forward sum 2}
\frac{\underline P\, v}{W\,\underline W}-\,\frac{P\,\underline v}{W\,\underline W}=-\,\sum_{k=n}^{m_0-1}\frac{P_k\,{\mathcal R}_2(v_k, v_{k-1},t_k)}{W_{k+1}\,W_{k}}+\left(\frac{\underline P\, v}{W\,\underline W}-\,\frac{P\,\underline v}{W\,\underline W}\right)\Bigm|_{n=m_0}
\end{equation}

Multiplying by $W\,\underline W$ and dividing by $P\,\underline P$, we again get an exact difference on the left, which we integrate to obtain Equation \eqref{eq: v int}.
\end{proof}
\subsection{Proof of Lemma \ref{contraction}}
\begin{proof}
Let $\mathcal D$ be the set of all $({\mathbf v}, t) \in \complex^2\times\complex$ where ${\mathbf v}=(x, y)$, such that $\|{\mathbf v}\|\le \delta_0$, $|t|<\epsilon_0$. Also let ${\mathcal V}$ denote the set of all continuous functions ${\mathbf v}: {\mathcal T}\to \complex^2$ such that $(x, y, t)\in {\mathcal D}$ for every $t\in{\mathcal T}$ and $t\mapsto {\mathcal R}_2(x(t), y(t), t)$ is bounded on ${\mathcal T}$. 

We let $n_0$ run to infinity in the summation equation \eqref{eq: v int reverse} and get
\begin{align}\label{eq: v to infty}
v^+_n=-\,P^+_n\,\sum_{j=n}^{\infty}\,\frac{W_j\,W_{j-1}}{P^+_j\,P^+_{j-1}}\,\sum_{k=k_0}^{j-1}\frac{P^+_k\,{\mathcal R}_2(v_k, v_{k-1},t_k)}{W_{k+1}\,W_{k}}
\end{align}
where $P^+$ is as identified in Lemma \ref{linear eq}. Recall that in $\mathcal S$, we have $|P^+|\gg |P^-|$. Here, note that the functions ${P^+_k}/(W_{k+1}\,W_{k})$ in the inner sum (in $k$) are largest at the upper limit $k=j-1$ of this sum. But the functions  ${W_j\,W_{j-1}}/(P^+_j\,P^+_{j-1})$ in the outer sum are much much smaller, so that the product of the inner and outer sums is asymptotic to $|q|^{-3\,j^2/2+5j/2}$.

Let the right side denote the action of an operator ${\mathcal F}$ acting on each ${\mathbf v}\in{\mathcal V}$. Let $V$ be the space of all continuous functions ${\mathbf v}:{\mathcal T}\to \complex^2$ such that $\|{\mathbf v}\|\le \delta$ for all $t\in {\mathcal T}$. Then $V$ is a complete metric space with $d(v, w):=sup_{t\in {\mathcal T}}\|v(t)-w(t)\|$ and we get
\begin{align}\label{estimates}
|{\mathcal F}(v)(\xi)|&\le \,\sum_{m=2}^{\infty}\,|q|^{-3m(m-5/3)/2}\,(C_1\,\delta^2+C_2\,\epsilon)\le \delta\\
\left|{\mathcal F}(v^{(1)})(\xi)-{\mathcal F}(v^{(2)})(\xi)\right|&\le \,\sum_{m=2}^{\infty}\,|q|^{-3m(m-5/3)/2}\,(C_3\,\delta+C_4\,\epsilon)\|v^{(1)}_{n_0+m}-v^{(2)}_{n_0+m}\|
\end{align}
for every $v$, $v^{(1)}$, $v^{(2)}$ $\in { V}$ and $t\in {\mathcal T}$. It follows that ${\mathcal F}({ V})\subset { V}$ and that ${\mathcal F}$ is a contraction with contraction factor that is bounded above by $|q|\bigl(C_3\,\delta+C_4\,\epsilon\bigr)<1$, which completes the proof of the lemma.
\end{proof}

\end{document}